\title{Clairvoyant State Machine Replication}
\newcommand{\var}[1]{\lstinline+#1+}
\newcommand{\kPropose}{\var{Propose}}
\newcommand{\kConfirm}{\var{Confirm}}
\newcommand{\kProposeAck}{\var{ProposeAck}}
\newcommand{\kResolveAck}{\var{ResolveAck}}
\title{Clairvoyant State Machine Replication}
\author{Rida Bazzi\inst{1} \and Maurice Herlihy\inst{2}}
\institute{Arizona State University, Tempe, AZ, USA \email{bazzi@asu.edu} \and
Brown University, Providence, RI, USA \email{mph@cs.brown.edu}}
\authorrunning{R.\,A. Bazzi and M.\, Herlihy}
\begin{document}

\maketitle

\begin{abstract}
We propose a new protocol for the generalized consensus problem 
	in asynchronous systems subject to Byzantine server failures.
The protocol solves the consensus problem in 
	a  setting in which information about conflict between transactions
	is available (such information can be in the form of transaction
	read and write sets).
Unlike most prior proposals (for generalized or classical consensus), which use a 
	leader to order transactions, this protocol is leaderless,
	and relies on non-skipping timestamps for transaction ordering.
Being leaderless, the protocol does not need to pause for leader elections.
The use of non-skipping timestamps permits servers to commit transactions
	as soon as they know that no conflicting transaction can be ordered earlier.
For $n$ servers of which $f$ may be faulty,
this protocol requires $n > 4f$.
\end{abstract}

\section{Introduction}
A \emph{distributed ledger} is a distributed data structure, replicated 
	across multiple \emph{nodes}, where \emph{transactions} from 
	clients are published in an agreed-upon total order. 
Today, Bitcoin~\cite{bitcoin} is perhaps the best-known distributed ledger 
	protocol.

There are two kinds of distributed ledgers. 
In \emph{permissionless} ledgers, such as Bitcoin, any node can 
	participate in the common protocol by proposing transactions, 
	and helping to order them. 
In \emph{permissioned} ledgers, by contrast, a node must be 
	authorized before it can participate. 
Permissionless ledgers make sense for cryptocurrencies which seek to 
	ensure that nobody can control who can participate. 
Permissioned ledgers make sense for structured marketplaces, such as 
	financial exchanges, where parties do not necessarily trust one 
	another, but where openness and anonymity are not goals. 
State machine replication~\cite{state-machine-replication} is the most 
	common way to implement permissioned ledgers.

In state machine replication, the servers agree on a total order for all transactions,
	and every server executes the transactions in the same order.
If two successive transactions commute, the two transactions can be 
	executed in different orders by different servers.
To determine if two transactions commute, we can check if 
	the state variables accessed for reading or writing (\emph{read} and \emph{write} 
	sets) by one transaction are written to by the other transactions and vice-versa.
Existing state machine replication protocols are limited in their ability to exploit 
	transaction commutativity.
Protocols that exploit general transaction commutativity solve what is called 
	the \emph{generalized consensus} problem in which a dependency 
	structure is assumed on the transactions~\cite{pedone2002handling,lamport2005generalized}.
Published work on generalized consensus,~\cite{lamport2005generalized,pedone2002handling,peluso2016making,sutra2011fast}
	with few exceptions, is limited to systems with servers subject to crash failures.
Pires et al.~\cite{10.1007/978-3-319-69084-1_14} propose a leader-based 
generalized state machine replication algorithm,
and Abd-El-Malek et al.~\cite{abd2005fault} 
	propose a client-driven quorum-based protocol called Q/U that is very efficient 
	under low contention, but that  requires $n > 5f$,
        and can suffer from livelock in the presence of 
	contention even in synchronous periods.

The contribution of this paper is a novel permissioned ledger algorithm, which we 
	call \emph{Byblos}.
Byblos has three properties of interest.

\vspace{-1ex}
\begin{itemize}
\item {\bf Generality}. 
	Byblos exploits \emph{semantic} knowledge about client requests
  		to reduce transaction latency.
	Client transactions include statically-declared read and write sets.
	The technical key to effectively exploiting semantic knowledge is a novel
		use of  \emph{non-skipping timestamp}~\cite{non-skipping} to bound the 
		set of in-flight transactions that might end up ordered before a particular 
		transaction.
	If an otherwise-complete transaction does not conflict with any of
		its potential predecessors, that transaction can be committed without
		further delay. 
	For system loads with few conflicts between transactions, solutions for generalized 
		consensus can be much more efficient than solutions for traditional 
		consensus~\cite{lamport2005generalized}.
\item {\bf Leaderless}. 
	Byblos is \emph{leaderless}.
  	With some exceptions~\cite{abd2005fault,leaderless,HoneyBadger},
		prior replicated state machine algorithms use a \emph{leader} to order 
		client requests.
	Leader-based algorithms typically have two kinds of phases:
  		a relatively simple normal phase where the leaders send and receive 
		messages to the others, and a complicated ``view change'' 
		phase~\cite{PBFT,zyzzyva,FaB,paxos-complex} used to detect and 
		replace faulty leaders.
	Leader election comes at a cost:
		client requests are typically blocked during leader election even in 
		periods of synchrony.
	Such delays are especially problematic if valuable periods of
		synchrony are spent electing leaders instead of making progress.
  		(Other leaderless protocols, such as EPaxos~\cite{epaxos}, make
  		similar observations.)
		
	In Byblos, transactions are guaranteed to terminate in periods of synchrony.
	Technically, Byblos does not need a leader because it is centered
  		around a leaderless non-skipping timestamp algorithm.
\item {\bf Simple}.  Byblos is simple to explain and understand. While simplicity is subjective,
	readers who are familiar with other protocols for Byzantine fault tolerance will note
	that the full protocol is described in this paper.
\end{itemize}

\vspace{-1ex}
In Byblos, transactions are ordered by timestamp,
with ties resolved through some canonical ordering such comparing transaction identifier hashes.
For a given timestamp value $t$, Byblos can determine an upper bound on the 
	set of in-flight \emph{pending} transactions that \emph{might} have  
	timestamp $t$ assigned to them. 
We say this ability to bound the set of potentially conflicting pending transactions makes 
	Byblos \emph{clairvoyant}.
If a transaction $T$ with timestamp $t$ is the next one to be executed by a replica 
	among those transactions with timestamp $t$, and $T$ does not conflict with 
	any of the \emph{pending} transactions, then $T$ can be executed without 
	waiting for the status of the pending transactions to be resolved.
Byblos guarantees progress by calling on an ``off-the-shelf'' asynchronous 
	Byzantine agreement algorithm, preferably early deciding~\cite{ben-or,zielinski-asynch},
	to \var{CANCEL} or \var{COMMIT} pending transactions
\footnote{Asynchronous consensus algorithms are 
	those that guarantee safety at all times, and progress under eventual synchrony}.

Byblos tolerates $f < n/4$ faulty servers, assuming the underlying consensus algorithm 
	does the same.
If there are no conflicts between pending transactions and transactions waiting for execution,
	Byblos can make progress even in periods of complete asynchrony.
(This claim does not contradict the FLP impossibility result~\cite{flp}.)
	
The rest of the paper is organized as follows. Section~\ref{sec:related} discusses related work. 
Section~\ref{sec:model} introduces the problem and the system model.
Section~\ref{sec:byblos} gives a detailed description of Byblos assuming clients fail by crashing
	and Section~\ref{sec:proof} proves its correctness and Section~\ref{sec:byzantine-client} describes 
	how Byzantine clients can be tolerated. 
Section~\ref{sec:perform} discusses the performance of Byblos under various assumptions on failures
	and synchrony and Section~\ref{sec:conclude} concludes the paper.

\section{Related Work}\label{sec:related}
Leader-based distributed ledgers such as Paxos~\cite{Paxos} and Raft~\cite{raft}
do not exploit knowledge of read-write sets to reduce latency and increase throughput.
Distributed ledgers that do exploit such information include Generalized Paxos~\cite{lamport2005generalized},
	Egalitarian Paxos~\cite{epaxos}, Hyperledger Fabric~\cite{hyperledger}, NEO~\cite{NEO},
	and Bitcoin itself~\cite{bitcoin}.

There is a large body of literature on state machine replication, most of which is 
	leader-based.
Clement \emph{et al.}~\cite{Aardvark} observe that many Byzantine
	fault-tolerant (BFT) protocols can perform poorly in the presence of Byzantine 
	failures.
They define the notion of a \emph{fragile optimization},
	where a single misbehaving party can knock the system off an 
	optimized path.
They also define \emph{gracious} (synchronous, non-faulty) and
	\emph{uncivil} (synchronous, limited Byzantine faults) executions.
They argue that while most BFT protocols are optimized only for gracious 
	executions, it is also important that protocols perform well in uncivil executions.  
They propose \emph{Aardvark}, a BFT protocol designed to perform well under 
	uncivil executions.
Aardvark uses a leader, with regularly-scheduled view changes.
The protocol includes safeguards against censorship by the leader.
Amir \emph{et al.}~\cite{Prime} introduce  \emph{bounded delay} as
	a performance goal for BFT protocols.
Bounded delay captures the inability of Byzantine servers or clients
	to impose arbitrary delays on transaction processing.
They introduced \emph{Prime},
	a BFT protocol that uses a leader that is monitored by other
	servers to provide bounded delay in the presence of limited Byzantine 
	failures.

\emph{Paxos}~\cite{Paxos} and Raft~\cite{raft} are
perhaps the best-known non-Byzantine replication protocols.
Other Paxos-related non-Byzantine protocols include
\emph{Mencius}~\cite{Mencius} and \emph{EPaxos}~\cite{epaxos}.
These protocols, with the exception of  EPaxos~\cite{epaxos}, use some form 
	of leader (or leaders) and view changes.
Milosevic \emph{et al.}~\cite{BoundedByz}  proposed a \emph{BFT-Mencius}
	which also uses performance monitoring and view changes 
	to limit the effects of slow servers.
Byblos does not use view changes or performance monitoring and
	hence allows unbounded variance below the timeout threshold.

Prior protocols that perform relatively well under uncivil executions
	perform less well in civil executions,
	compared to protocols optimized only for civil executions.
Byblos is different.
Its latency, measured in the number of message round trips,
	is comparable to protocols optimized for civil execution.
In the absence of slow or faulty clients, 
	its latency in uncivil executions is also comparable to that of
	protocols optimized for civil executions.
On the downside,
	Byblos uses signatures, whereas some protocols use faster message 
	authentication codes.

One BFT protocol that does not use leaders or view changes is	
	\emph{HoneyBadgerBFT}~\cite{HoneyBadger}.
Unlike most BFT protocols,
HoneyBadgerBFT does not assume eventual (or partial)
	synchrony, but relies on
	a randomized atomic broadcast protocol with a cryptographic shared coin.
HoneyBadgerBFT ensures censorship resistance through a cryptographic subprotocol.
Unlike Byblos, HoneyBadgerBFT does not exploit transaction semantics.
The \emph{RBFT} BFT protocol~\cite{rBFT} uses multiple
	leaders, who track one another, and provide censorship resistance.
It is designed for 
systems in which clients 
	can have multiple parallel pending requests.
Aublin \emph{et. al}~\cite{AublinLGKQV2015} describe a family of protocols,
	some of which have low (2-message) latency in synchronous executions.

As noted, the protocols discussed, with the exception of  EPaxos~\cite{epaxos}
	which only tolerates crash failures, do not solve the generalized
	consensus problem~\cite{lamport2005generalized,pedone2002handling}.
Abd-El-Malek et al.~\cite{abd2005fault} 
	propose a client-driven quorum-based protocol called Q/U that is very
	efficient under low contention, but that  requires $n > 5f$ and 
	can suffer from livelock due to contention even in synchronous periods.
The algorithm is leaderless and uses exponential backoff in the presence of contention.
Cowling \emph{et al.}~\cite{cowling2006hq} aims at improving Q/U by reverting to using a leader.
Recently Pires et al.~\cite{10.1007/978-3-319-69084-1_14} proposed a leader-based Byzantine version 
	of generalized Paxos. 


In general, faulty clients in Byblos can force servers to revert to
an ``off-the-shelf'' binary Byzantine consensus protocol to resolve
the outcome of ``stuck'' transactions.
Triggering the agreement protocol might incur a timeout which can be
significantly larger than typical communication delay even for fast protocols
	(for example, Ben-Or \emph{et al.}~\cite{Ben-OrKR1994} or Mostefaoui
	\emph{et al.}~\cite{MostefaouiMR2014}).
It might seem that Byblos replaces one source of delay
	(a faulty leader) with another (faulty clients),
but this replacement allows us to exploit transaction semantics which 
	can be a significant improvement in some settings.
In systems in which faulty servers can delay the processing of
	transactions (which is almost all systems), everyone is delayed.
	(These issues are discussed in Section~\ref{sec:perform}.)

\section{Problem and System Model}\label{sec:model}
A \emph{ledger} (Figure~\ref{fig:ledger}) can be thought of as an
	automaton consisting of a set of \emph{states}
	(for example, clients' account balances),
	a set of deterministic state transitions called \emph{transactions}
	( for example, deposits, withdrawals, and transfers),
	and a \emph{log} recording the sequence of transactions.
The state is needed to efficiently compute transactions' return values
	(for example, overdrawn account).
The log provides an audit trail:
	one can reconstruct any prior state of the ledger,
	and trace who was responsible for each transaction.

Our solution encompasses the following components.
There are $n$ \emph{servers} that maintain the ledger's long-lived state via a
	set of replicated  logs. 
Up to $f$ of $n = 4 f + 1$ servers may be Byzantine
	(not behaving according to the protocol).
The rest of the servers are \emph{correct}.
The logs of correct servers are only modified by appending new transactions.
The servers satisfy the following safety property: for any pair of correct servers,
	one server's log is a prefix of the other's.
It follows that correct servers execute all transactions 
	in the same order.
	
There is a potentially unbounded number of \emph{clients}
	who originate transactions.
It is the servers' job to accept transactions from clients,
	order them, and publish this order.
We assume that the clients are not Byzantine;
	in Section~\ref{sec:byzantine-client} we explain how to handle Byzantine clients.

Communication is handled by an underlying message-diffusion system.
Clients broadcast messages,
	which are eventually delivered to all correct servers.
Servers communicate with one another through the same diffusion infrastructure.
We assume that messages are delivered in the order they are sent which can be easily 
	implemented in practice using message counters.
We assume that the identity of the sender is included in the message and that messages cannot be forged;
a client or server cannot send a message to another server or client that has a different sender than the actual sender.
In practice,  this is enforced using signatures.

The ledger state is a key-value store.
Each client transaction declares a \emph{read set},
	the set of keys it might possibly read, and a \emph{write set},
	the set of keys it might possibly write.
Any transaction that violates its declaration is rejected.
	(Systems such as Generalized Paxos~\cite{lamport2005generalized},
Egalitarian Paxos~\cite{epaxos}, and NEO~\cite{NEO} all make use of similar 
	conflict declarations.)

\begin{figure}[!t]
\begin{lstlisting}
  state = initialState
  log = []
  while true:
    on receive T from c:
      log.append(T)
      state, result = apply(T, state)
      send result to c
\end{lstlisting}
\caption{Ledger Abstraction}
\label{fig:ledger}
\end{figure}

\section{Byblos Description}\label{sec:byblos}

In Byblos, transactions are assigned integer \emph{timestamps},
which partially determine the order in which transactions are applied.
If two transactions do not overlap in time,
the later one will be assigned a larger timestamp, but overlapping
transactions may be assigned the same timestamp.
One can think of assigning a timestamp as assigning a slot to a transaction.
Non-overlapping transactions will be in different slots, but overlapping transactions
	might be in the same slot.
The protocol we use for assigning timestamps guarantees that a new slot 
	will not be assigned to a transaction before  every previous slot has at least
	one transaction in it. 
This is guaranteed by {\em non-skipping} property of the assigned timestamps~\cite{non-skipping}: 
	if a timestamp $t$ is assigned to a transaction, then every timestamp whose
value is less than $t$ must have been previously assigned to some other transaction.

The non-skipping timestamp protocol~\cite{non-skipping} at
the heart of the algorithm is simple.
A client broadcasts a timestamp request to the servers,
and collects at least $n-f$ timestamps in response.
The client selects the $(f+1)^\textrm{st}$ latest timestamp,
which is guaranteed to be less than or equal to the latest timestamp
assigned to any transaction.
The client increments that timestamp by one,
and later broadcasts it to the servers.
It has been shown~\cite{non-skipping} that this way of choosing
timestamps ensures that no timestamp value is skipped.

The properties of non-skipping timestamps suggest a simple way for servers to 
execute transactions in a deterministic order in the absence of client failures.
Starting with timestamp 1, execute all transactions with timestamp 1 in some 
deterministic order, then execute all transaction with timestamp 2 in some 
deterministic order, and so on. In general,  once all transactions with timestamp 
$t$ are executed, transactions with timestamp $t+1$ can be executed. 
This technique is, of course, too simple, even in the absence of client failures.
The catch is that servers cannot determine when all
transactions with a given timestamp value have been received.

To determine when all transactions with timestamp $t$ have been received,
Byblos calculates for each timestamp $t$, a set of \var{pending}[$t$] of 
transactions that were detected to be concurrent with, or 
occurring before transactions with timestamp $t$. 
The set of pending transactions contains transactions, but not their assigned 
timestamps, because those timestamps might not be known at the time a 
transaction is added to a pending set. The crucial property is the following:  
\var{pending}[$t$] is guaranteed to contain all transactions whose assigned 
timestamp will be $t$ or less. However, it may also include some transactions 
that will be assigned timestamps larger than $t$. For a given transaction $T$ 
with timestamp $t$, if the timestamps of the transactions in \var{pending}[$t$] 
are known, then, by the property of the pending set, the set of all transaction 
that could be assigned timestamp $t$ is also known. In the absence of client 
failures, servers can execute transactions  as follows: transactions with 
timestamp $t$ can be executed when the timestamps of all transactions  
in the pending set become known and all transactions ordered earlier
have been executed. 

Servers can do better by taking into consideration potential conflicts between 
transactions. Given a deterministic ordering on transactions, with transactions with 
smaller timestamps appearing before transactions with larger timestamps in the 
order, a transaction $T$ with timestamp $t$ can be executed when all conflicting 
transactions that appear before $T$ in the order are executed and there are no 
conflicting transactions in the pending set that could be ordered before $T$ (this 
requires that the timestamps of those conflicting transactions be known).

The description so far assumes no client failures. If clients can fail, some transactions 
in the pending set might never be assigned timestamps, and the servers will be stuck, unable to 
determine when all potentially conflicting transactions with timestamp $t$ have been 
received. We resolve this situation by falling back to a binary consensus algorithm, over 
the values \var{COMMIT} and \var{CANCEL}, to resolve the fates of orphaned 
transactions. Each client tries to commit its own transaction using the consensus 
algorithm, and servers try to cancel pending transactions that are slow to receive their timestamps.
We can use any ''off-the-shelf'' consensus algorithm that is 
guaranteed to terminate if the system is eventually synchronous, including known 
algorithms that terminate in one round if the system is 
well-behaved~\cite{zielinski-asynch}. Transaction execution proceeds as follows. 
Once all conflicting transactions in the pending set of some transaction $T$ with 
timestamp $t$ are either cancelled or their timestamp is known, the position of $T$ 
in the order will be known and the execution can then proceed as outlined above 
for all transactions that have not been cancelled.

The protocol guarantees safety at all times and liveness under eventual 
synchrony~\cite{PBFT}. 
The rest of this section describes the client and server code in details. In this 
description we assume that clients can fail by crashing but are not Byzantine. 
We address Byzantine clients in Section~\ref{sec:byzantine-client}.

\begin{figure}[!t]
\begin{lstlisting}
 received = `$\emptyset$` // messages from servers
  `$\forall$` s in Servers timestamp[s] = 0

  // get valid timestamp
  broadcast Propose(T) to Servers`\label{line:propose}`
  repeat
     on receive m: received = received `$\cup$` {m}
  until ProposeAck(T, t) received from `$\geq n-f$` servers `\label{line:propose-ack}`
   `$\forall$` s in Servers: if `$\exists$` t : ProposeAck(T, t) received from s
       timestamp[s] = t
   `$\hat{t} = ((f+1)^{\textrm{st}}$` largest value in timestamp[*]) + 1`\label{line:ts}` `\label{line:propose:end}`

  // broadcast timestamp, and wait for transaction result
  broadcast Confirm(T,`$\hat{t}$`) to Servers`\label{line:confirm}`
  repeat
     on receive m: received = received `$\cup$` {m}
  until `$\geq f+1$` identical Resolve(T, code, result) messages received 
  if code = COMMIT (in `$\geq f+1$` messages) then
     return result
  else:
     return `$\bot$``\label{line:commit:end}`
\end{lstlisting}


\caption{Client Code}
\label{fig:client}
\end{figure}

\subsection{Client Code}

The client code (Figure~\ref{fig:client}) proceeds in two stages.
In the first stage (Lines~\ref{line:propose}--\ref{line:propose:end}), the client sends 
a \kPropose{}  message to all servers, and collects at least $n-f$ \kProposeAck{} 
responses. Each response contains an integer timestamp that is the local 
\var{clock} value at the server when the server received the client request. 
The client calculates $\hat{t}$ which is equal to 1 plus the $(f+1)^\textrm{st}$ 
largest among the timestamps it received and assigns it to its transaction. 
It is important to note that this particular way of choosing timestamps is what 
guarantees timestamps to be non-skipping.
In the second stage (Line~\ref{line:confirm}--\ref{line:commit:end}), the client 
broadcasts a \kConfirm{} message with $\hat{t}$,and waits to receive $f+1$ 
identical \kResolveAck{} responses to determine the transaction's outcome. 
A \kResolveAck{} message has three fields:
(1) the transaction, (2) a code, either \var{COMMIT} or \var{CANCEL}, and 
(3) a result. If the return code is \var{COMMIT},
the call was successful, and the result is returned, otherwise a failure indication 
is returned. If the return code is \var{CANCEL}, the call was unsuccessful.
perhaps because the client was delayed in the middle of the protocol.
If a  failure indication $\bot$ is returned, the client is free to attempt the call 
again.

\subsection{Server Code}
\begin{figure}[!t]
\begin{center}
\begin{tabbing}
xxxxxxxxxxxx\=xxxxxxxxxxxxxxxxx\=\kill
\> \var{state} = initialState		\\
\> \var{clock} = 0	\\
\> \var{proposed}[*][*] = $\emptyset$     \>   // servers $\mapsto$ timestamp $\mapsto$ Set transaction                                      \\
\> \var{pending}[*] = $\emptyset$    \>  // timestamp $\mapsto$ Set transaction		\\
\> \var{confirmed}[*] = $\emptyset$ \>   // timestamp $\mapsto$ Set transaction		\\
\> \var{committed} = $\emptyset$    \>   // timestamp $\mapsto$ Set transaction		\\
\> \var{cancelled} = $\emptyset$     \>  // Set transaction					\\
\> \var{resolving}  = $\emptyset$     \> // Set transaction					\\
\> \var{log} = []          \>  // Sequence transaction\\
\> \var{timer[*]} = $\infty$          \>  // timestamp $\mapsto$ real time timer\\
\> \var{time} : real time clock value at a server	
\end{tabbing}
\end{center}
\caption{Server State with Initializations}\label{fig:server-data}
\end{figure}

\subsubsection{Server State}
The server state (Figure~\ref{fig:server-data} ) is composed of the following fields.
\begin{itemize}
\item
\var{state} is the ledger state. A transaction is applied to \var{state} when it 
commits.

\item
\var{clock} is an integer counter that tracks the latest timestamp assigned to
a transaction. We assume this counter does not overflow. Since timestamps
are non-skipping, a 128-bit counter should be more than sufficient in practice.

\item
\var{proposed} keeps track of transactions that have been proposed at various 
servers and the local clocks at those servers when the proposed message 
was received. When a transaction is added to \var{proposed}[$s$][$\mit{k}$], 
where $s$ is a server and $\mit{k}$ is a clock value, the timestamp of the 
transaction might not be known.

\item
\var{pending} is a map from timestamps to sets of transactions. For timestamp $t$,
\mbox{\var{pending}[t]} is the set of transactions that might be assigned timestamp $t$ or earlier. 

\item
\var{confirmed} is a map from timestamps to sets of transactions. For timestamp 
$t$,  \mbox{\var{confirmed}[t]} is the set of known transactions that will either commit 
with timestamp $t$ or will be cancelled. 

\item
\var{committed} is set of transactions known to have committed.
  
\item
\var{cancelled} is the set of transactions known to be cancelled.

\item
\var{log} is the sequence of committed transactions as they are applied 
to the ledger.

\item 
\var{timer} is an array of timers used to 
	timeout pending transactions. 

\item 
\var{time} is a local time at the server to measure
real time for timeouts. The local time of servers are independent and 
need not be synchronized.
\end{itemize}

\subsubsection{Server Actions}
The server continually receives messages (Figure~\ref{fig:server}). In the pseudocode 
we denote the sender of a message received by a server with $c$ if the sender is a 
client and with $s$ if the sender is a server.

When the server receives a \kPropose(\var{T}) (Lines~\ref{line:proposeAck}--\ref{line:proposeAck:end}) 
message from client \var{c}, it adds (\var{T},\var{clock}) to \var{proposed}[\var{self}], 
where \var{self} denotes the server executing the code and \var{clock} is the local clock 
value when the proposed message is received. The server returns the current \var{clock} 
value to the client and forwards the proposed message together with the clock value at which 
it was received to all other servers. When a server receives a forwarded \kPropose{} message 
from server \var{s}, it adds the message to the set \var{proposed}[\var{s}] of proposed messages 
that server \var{s} is aware of. These sets of messages will be used in calculating the sets of 
pending transactions for any timestamp value.

When a server receives a  \kConfirm(\var{T},$\hat{t}$) (Line~\ref{line:confirmAck}) message 
from a client or a server, the server advances \var{clock} to the maximum of $\hat{t}$ and 
its current value, and adds the transaction to the set of confirmed transactions for timestamp 
$\hat{t}$ and forwards the \kConfirm{} message to the other servers. If the received 
\kConfirm{} message is from a server \var{s}, \var{s} is added to a set 
\var{ConfirmWitness}[\var{T}], which is the set of servers that has received the 
\kConfirm{} message for transaction \var{T}.
If the set of witnesses is large enough, and the set \var{pending}[$\hat{t}$] 
is not yet calculated, the server calculates the set \var{pending}[$\hat{t}$]
by taking the union of all propose messages that servers in the set of 
witnesses were aware of before receiving the \kConfirm{} message for \var{T} . 
We will show that this pending set is guaranteed to contain all transactions that 
can be confirmed with timestamp $\hat{t}$ or less. 
Finally, if the transaction \var{T} for which the \kConfirm{} message is received is not 
in the process of being resolved (not resolving), the server also launches a consensus 
protocol with the other servers to try to to COMMIT \var{T}. The consensus protocol is run in
a \var{ResolveThread} forked by the server (Figure~\ref{fig:resolve}).

\begin{figure}[!t]
\begin{lstlisting}
repeat
   if Propose(T) received from c:						`\label{line:proposeAck}`
      proposed[self] = proposed[self] `$\cup$` {(T,clock)}
      send Proposed(T, clock) to all servers
      send ProposeAck(T, clock) to c 					`\label{line:proposeAck:end}`

   if Proposed(T,clock) received from s:
      proposed[s] = proposed[s] `$\cup$` {(T,clock)}

   if Confirm(T, `$\hat{t}$`) received from c or s:				`\label{line:confirmAck}`
      clock = max(clock, `$\hat{t}$`)				`\label{line:clock}`    
      confirmed[`$\hat{t}$`] = confirmed[`$\hat{t}$`] `$\cup$` {T}
      send Confirm(T, `$\hat{t}$`) to all servers if not previously sent
      ConfirmWitness[T] =  [T] `$\cup$` {s}
      if pending[`$\hat{t}$`] `$= \emptyset$` `$\wedge$` |ConfirmWitness[T]| `$= n-f$`:   
           pending[`$\hat{t}$`] =  `$\bigcup_{s \in \mbox{ConfirmWitness[T]}} \{$`T': `$\exists\,\mit{k}\leq \hat{t},$` (T,k) `$\in$` proposed[s]`$\}$``\label{line:pending-update}`
           timer[`$\hat{t}$`] `$=$` time`$+ \delta$` `\label{line:delay}`
      if T `$\not\in$` resolving:
          resolving = resolving `$\cup$` {T}
          fork ResolveThread(T, `$\hat{t}$`, COMMIT)         // try to commit this transaction `\label{line:COMMIT}`


   if StartResolution(T, `$\hat{t}$`, code) received from s:  // join another consensus`\label{line:startResolution}`
      if T `$\not\in$` resolving:                         // code is either COMMIT or CANCEL
          if code = COMMIT:
              resolving = resolving `$\cup$` {T}
              fork ResolveThread(T, `$\hat{t}$`, COMMIT);            
          else if code = CANCEL:                        
              CancelWitness[T] = CancelWitness[T] `$\cup$`  {s}  
              if |CancelWitness[T]| `$\geq f+1$`:     // correct server wants to cancel
                  resolving = resolving `$\cup$` {T}
                  fork ResolveThread(T, `$\bot$`, CANCEL);   	`\label{line:startResolution:end}`      	
     
                  
   if timer[`$t$`] expired for some `$t$`:                // if timer expired, try to cancel 
      for each txn in pending[t]:                //  slow transactions
          if txn `$\not\in$` resolving:                  
            resolving = resolving `$\cup$` {txn}
            fork ResolveThread(txn, `$\bot$`, CANCEL)				`\label{line:resolve:end}` 

   ApplyResolvedTransaction()                    // attempt to apply transactions
until false
\end{lstlisting}

\caption{Server Code}\label{fig:server}  
\end{figure}

\begin{figure}[!ht]
\begin{lstlisting}
ResolveThread(T, `$t$`, code)
   send StartResolution(T, `$t$`, code)  to all servers   `\label{line:start-resolution}` 
   if code = COMMIT then
      initialValue = 1
   else 
      initialValue = 0
   
   // run consensus protocol for transaction T 
   // with initial value = initialValue
   
   if decision = 1 then
      add T to committed
   else
      add T to cancelled
\end{lstlisting}
\caption{Resolve Thread}\label{fig:resolve}  
\end{figure}

\begin{figure}[!ht]
\begin{lstlisting}
  OrderBefore(T,T') = 
        T `$\in$` confirmed[`$t$`] `$\wedge$`
        ((T' `$\in$` confirmed[`$t'$`] `$\wedge$` `$t < t'$`) `$\vee$` (T' `$\in$` confirmed[`$t$`] `$\wedge$` `$T < T'$`) `\label{order-before}`
    
  ApplyResolvedTransaction() 
     if `$\exists$` T `$\in$` committed `$\cap$` confirmed[`$t$`] `$\wedge$` pending[`$t$`] `$\neq \emptyset$` : `$\forall$`T' `$\in$` pending[`$t$`]:
              `$\neg$`conflict(T,T') `$\vee$` T' `$\in$` cancelled `$\vee$` T' `$\in$` log `$\vee$` OrderBefore(T,T') `\label{line:termination}`
          log.append(T)
          state, result = apply(T, state))   `\label{line:apply}`
          send ResolveAck(T,COMMIT,result) to T sender
     if `$\exists$` T: T `$\in$` cancelled `$\wedge$` T `$\not\in$` log  
          log.append(T)
          send ResolveAck(T,CANCEL,`$\bot$`) to T sender 
\end{lstlisting}
\caption{Applying Resolved Transactions}\label{fig:apply}  
\end{figure}

The only remaining point that can obstruct \var{T}'s execution are pending transactions. 
These are transactions that might be confirmed with timestamps less than
or equal to the timestamp of \var{T} and,
if they conflict with \var{T},
will be applied to the log before \var{T}.
The server sets a timer to give each pending transaction time to be confirmed.
If the timer expires and a transaction in the pending set is not confirmed, the transaction
is considered obstructing,
and a starting a consensus protocol is launched to try to \var{CANCEL} the the transaction. 
In practice, timer delay can be adjusted dynamically to match the current level of synchrony~\cite{FaB}.
We discuss the timer requirements later in the correctness proofs.

The \var{ResolveThread} (Figure~\ref{fig:resolve}) encapsulates the consensus protocol executed by the servers. 
The first message sent by ResolveThread is a \var{StartResolution(T,code)} message (Line \ref{line:start-resolution} of \var{ResolveThread})
	which lets a server that has not heard directly from a client join the consensus protocol for a given transaction 
	(Lines~\ref{line:startResolution}--\ref{line:startResolution:end} of the server code). 
After sending the \var{StartResolution} message, the \var{ResolveThread} starts a binary consensus 
	protocol for the transaction with initial value 1 for COMMIT and initial value 0 for CANCEL.
The ResolveThread adds $T$ to the set \var{committed} if the decision of the consensus protocol is 1 and adds 
	$T$ to the set \var{cancelled} if the decision of the consensus protocol is 0.

        When a server receives a \var{StartResolution(T,code)},
        its behavior depends on whether \var{code} is \var{COMMIT} or \var{CANCEL}.
A server immediately joins an attempt to commit a transaction,
but joins an attempt to cancel a transaction if it receives $f+1$ distinct
\var{StartResolution(T,CANCEL)} messages for transaction $T$,
ensuring that at least one correct server timed out that transaction.

Finally, a server attempts to apply transactions (Figure~\ref{fig:apply}) to the replica state.
It is essential that conflicting transactions are executed in the same order by all servers.
We use the \var{OrderBefore}() predicate which to determine the order of confirmed transactions 
(whose timestamps are known). 
Servers order confirmed transactions first by timestamp,
and then by hashing their transaction identifiers,
yielding a deterministic order on transactions,
called the \emph{canonical} order
(Figure~\ref{fig:apply}, Line~\ref{order-before}).
If any transaction is not confirmed,
the \var{OrderBefore}() predicate evaluates to \emph{false}.

A transaction can be applied once is committed and confirmed for timestamp $t$,
the set of pending transactions for $t$ is known,
and if every other transaction \var{T'} in the pending set is either non-conflicting,
cancelled, or already applied to the log.
	
Note that it is possible that pending transactions
might appear in groups with different timestamps at different correct servers, but if they become committed, they 
will have the same timestamp at all correct servers, and if they are cancelled, they will be cancelled by all correct servers. 

\section{Proofs of Correctness}\label{sec:proof}
In this section we prove that the solution correctly implements a
distributed ledger by showing that client transactions receive the
same responses from all correct replicas,
and that in periods of synchrony the non-faulty clients' transactions
are eventually applied. 

We start by restating the model assumptions.
Messages are delivered in the order sent:
if a correct server $s_1$ receives a
forwarded \kConfirm{} message from another correct server $s_2$, then
$s_1$ must have received all previously forwarded \kPropose{} messages
from $s_2$.
In periods of synchrony,
the binary consensus protocol is guaranteed to terminate.  
All correct servers invoke \var{ResolveThread}{}),
and either they all add the transaction to the \var{committed} set,
or they all add the transaction to the \var{cancelled} set.
Finally, client messages forwarded by servers are signed by the
clients and cannot be forged.
(We omit message validation from our code for simplicity.)
	
\subsection{Supporting Lemmas}
\begin{lemma}[Same confirmed timestamp]
If a transaction is confirmed with timestamps $t$ and $t'$ at two correct servers, then $t = t'$.
\end{lemma}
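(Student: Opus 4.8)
The plan is to trace every \kConfirm{} message for a given transaction back to a single timestamp chosen by that transaction's originating client. First I would observe that a correct server adds a transaction $T$ to \var{confirmed}[$t$] only inside the \kConfirm{}-handling branch of the server loop (Line~\ref{line:confirmAck}), and only in response to having received a \kConfirm{} message naming $T$ and carrying the value $t$. Hence, if $T$ is confirmed with timestamp $t$ at a correct server $s_1$ and with timestamp $t'$ at a correct server $s_2$, then $s_1$ received a \kConfirm{} message for $T$ carrying $t$ and $s_2$ received one carrying $t'$. It therefore suffices to show that every \kConfirm{} message for $T$ ever received by any correct server carries one and the same timestamp.

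Second, I would classify the possible senders of such a message. A \kConfirm{} message for $T$ originates only from the client $c$ that issued $T$ (client code, Line~\ref{line:confirm}); servers merely forward it (and each correct server forwards it at most once, verbatim). By the model assumptions, the sender identity is authenticated, and a forwarded client message remains signed by the client and cannot be forged or altered. Consequently, any \kConfirm{} message for $T$ --- sent directly by $c$, forwarded by a correct server, or forwarded by a Byzantine server --- carries exactly the timestamp that $c$ signed; a Byzantine server may replay, reorder, or suppress such a message, but cannot change its timestamp field nor manufacture one with a fresh value.

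Third, I would invoke the assumption that clients are correct in this section. Executing the client code (Figure~\ref{fig:client}) for $T$, client $c$ computes a single value $\hat t$ on Line~\ref{line:ts} and broadcasts the \kConfirm{} message for $T$ exactly once, with that value. So $\hat t$ is the only timestamp $c$ ever signs into a \kConfirm{} message for $T$. Combining this with the previous step, every \kConfirm{} message for $T$ in the system carries $\hat t$; in particular $t = \hat t = t'$, which is the claim.

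The only delicate point is the authentication step: the lemma rests entirely on the assumption that forwarded client messages are signed and unforgeable, so that a Byzantine server cannot inject a \kConfirm{} message for $T$ with a bogus timestamp $t'' \neq \hat t$. Everything else is a direct reading of the pseudocode; in particular, there is no need here to reason about the non-skipping timestamp mechanism, about quorums, or about the consensus subprotocol, since a single honest signer already pins down the value.
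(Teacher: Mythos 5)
Your proof is correct and follows essentially the same route as the paper's: both arguments note that a correct server only adds $T$ to \var{confirmed} upon receiving a \kConfirm{} message, and that since the (non-Byzantine) client signs a single timestamp $\hat t$ into its one \kConfirm{} broadcast and signatures cannot be forged, every such message---direct or forwarded---carries the same value. Your version is simply a more carefully spelled-out rendering of the paper's two-sentence argument.
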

\begin{proof}
A transaction is added to a the confirmed set if a \var{Confirm}(\var{T},$\hat{t}$) is received from a client directly or indirectly forwarded by a server (Figure~\ref{fig:server}  Line~\ref{line:confirmAck}). Since messages by the client are assumed to be signed by the client and clients are assumed to be non-Byzantine, a server cannot forge a confirm message and the confirm message sent by the client to different servers will have the same timestamp.
\end{proof}

The previous lemma allows us to define what it means for a transaction to be assigned a timestamp value.
\begin{definition}[Assigned timestamp]
We say that a transaction $T$ is \emph{assigned timestamp} $t$ if some correct server received 
a \var{Confirm}(T, $\hat{t}$) message from a client.
\end{definition}

\begin{lemma}[Non-Skipping timestamps]
If a transaction is assigned timestamp $t$, then for every $t'$, $0 < t' < t$, some transaction 
	is assigned timestamp $t'$.
\end{lemma}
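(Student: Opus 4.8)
The plan is to reduce the non-skipping property of Byblos timestamps to the known non-skipping property of the underlying timestamp algorithm~\cite{non-skipping}, by tracing how a timestamp $\hat t$ gets computed in the client code. Suppose transaction $T$ is assigned timestamp $t$, i.e.\ some correct server received $\var{Confirm}(T,t)$ from a client. That $t = \hat t$ was computed (Figure~\ref{fig:client}, Line~\ref{line:ts}) as one plus the $(f+1)^{\textrm{st}}$ largest value among at least $n-f$ timestamps returned in $\var{ProposeAck}$ messages, each of which is a $\var{clock}$ value at the responding server at the moment it processed $\var{Propose}(T)$. So the core of the argument is: the $\var{clock}$ value a correct server reports in a $\var{ProposeAck}$ is always a timestamp that has already been assigned to some transaction (or $0$), and the moves between consecutive assigned timestamps never skip.

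The first step is to establish an invariant on a correct server's $\var{clock}$: at all times, $\var{clock}$ equals the largest timestamp $\hat t$ such that that server has received some $\var{Confirm}(\cdot,\hat t)$ (directly or forwarded), or $0$ if it has received none. This follows from inspecting the only line that modifies $\var{clock}$ — Line~\ref{line:clock}, $\var{clock} = \max(\var{clock},\hat t)$ on receipt of a $\var{Confirm}(T,\hat t)$ — together with the fact that every confirmed timestamp was assigned (by definition) to the transaction named in that $\var{Confirm}$. The second step combines this with the selection rule: the client collects $n-f$ timestamps and takes the $(f+1)^{\textrm{st}}$ largest, call it $t_0$; since at most $f$ of the responders are faulty, at least one correct server reported a clock value $\geq t_0$, hence (by the invariant) some transaction was already assigned a timestamp $\geq t_0$. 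This is exactly the hypothesis needed to invoke the non-skipping guarantee of~\cite{non-skipping}: since every timestamp below $t_0$ was already assigned, and the new timestamp is $t = t_0 + 1$, no value in $\{1,\dots,t-1\}$ is skipped. The argument then closes by a simple induction on $t$: the base case $t=1$ is vacuous, and for $t > 1$ the selection rule forces $t_0 = t-1 \geq 1$ to have been assigned, so by the induction hypothesis every $t'$ with $0 < t' < t-1$ was assigned as well, and $t_0 = t-1$ is assigned by the above, covering all $0 < t' < t$.

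One subtlety to handle carefully is that the definition of ``assigned'' quantifies over \emph{some} correct server receiving the $\var{Confirm}$, whereas the clock-invariant is stated per server; the bridge is that a $\var{Confirm}(T,\hat t)$ originating from a (non-Byzantine) client carries the same $\hat t$ to every recipient (this is essentially the ``Same confirmed timestamp'' lemma and the unforgeability assumption on client messages), so ``server $s$'s clock reflects a $\var{Confirm}$ it saw'' does translate into ``that timestamp was assigned'' in the global sense. A second point worth a sentence is that a correct server's clock can only be advanced by $\var{Confirm}$ messages — there is no code path that bumps it on a bare $\var{Propose}$ — so the $\var{clock}$ value returned in a $\var{ProposeAck}$ is genuinely an already-assigned timestamp and not something speculative.

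I expect the main obstacle to be bookkeeping rather than conceptual depth: pinning down precisely that the $(f+1)^{\textrm{st}}$-largest-of-$(n-f)$ rule guarantees at least one \emph{correct} responder with clock $\geq t_0$ (this needs $n - f - f \geq 1$, i.e.\ $n > 2f$, which holds since $n = 4f+1$), and then invoking the external non-skipping result~\cite{non-skipping} with exactly the right quantifiers so that the induction goes through cleanly. If one would rather not cite~\cite{non-skipping} as a black box, the induction sketched above is self-contained and reproves it in this setting in a few lines.
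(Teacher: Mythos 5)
Your setup is sound and matches the paper's: the clock invariant (a correct server's \var{clock} only advances on receipt of a \kConfirm{} message, so any value it reports in a \kProposeAck{} is an already-assigned timestamp or $0$), and the counting argument that among the top $f+1$ of the $n-f$ responses at least one comes from a correct server. The gap is in the closing induction on the value $t$. From the counting argument you obtain a correct server whose reported clock $c$ satisfies $c \geq t-1$; by the invariant, \emph{some} transaction was assigned timestamp $c$. You then assert that ``$t_0 = t-1$ is assigned by the above,'' but the above only gives you an assigned timestamp $\geq t-1$, not equal to it. Nothing prevents the witnessing correct server from having already seen a \kConfirm{} for some unrelated concurrent transaction with $c$ strictly greater than $t-1$ (possibly $c \geq t$), in which case your induction hypothesis --- indexed by the timestamp \emph{value} and covering only values below $t$ --- cannot be applied to the transaction assigned $c$, and you never learn that $t-1$ itself was assigned. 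Invoking~\cite{non-skipping} as a black box at that point does not help either: the non-skipping property of this protocol's timestamps is exactly what the lemma is asserting.

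The paper closes this hole with an extremal argument ordered by \emph{time of assignment} rather than by timestamp value: fix $t_0$ and consider the temporally first transaction $T'$ assigned a timestamp $t' > t_0$. The correct responder's clock value $t''$ satisfies $t'' \geq t'-1 \geq t_0$, and $t''$ was assigned to some transaction strictly before $T'$ was assigned; since $T'$ is the first to exceed $t_0$, that earlier assignment forces $t'' \leq t_0$. The two inequalities squeeze $t'' = t_0$ and $t' = t_0+1$, so the first timestamp ever assigned above $t_0$ is exactly $t_0+1$, which yields the lemma. Your proof becomes correct if you replace the induction on the value $t$ with this minimal-counterexample argument, or equivalently with an induction on the sequence of assignment events showing that the set of assigned timestamps is at all times downward-closed.
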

\begin{proof}
We show that for any $t_0$, the first timestamp larger than $t_0$ assigned to a transaction is $t_0+1$.
Consider the first transaction $\mbox{\var{T}}'$  for which 
a timestamp $t' > t_0$ is assigned. The client must have received replies greater than or equal $t'-1$ from 
$f+1$ servers one of which must be correct. 

Consider the correct server that returned a timestamp $t'' \geq t'-1 \geq t_0$.
The timestamp $t''$ must have been assigned to an earlier transaction because the clock 
	can only advance at a correct server after a timestamp is assigned (Line~\ref{line:clock}).
This means that $t''$ cannot be greater than $t'-1$ for otherwise an earlier transaction is assigned timestamp larger than $t_0$. But $t'-1 \geq t_0$, which gives us $t'' \leq t_0$.
So, we have established that  $ t_0 \leq t'-1 \leq t''$  and $t'' \leq t_0$, which means that $t'' = t_0$ and $t' = t_0+1$.
 \end{proof}

The next two lemmas establish that if $\mbox{\var{T}}_1$  does not \emph{see} $\mbox{\var{T}}_2$, then $\mbox{\var{T}}_2$ must have a larger timestamp and
executing $\mbox{\var{T}}_1$  can be done without establishing the resolution status of $\mbox{\var{T}}_2$. We start with a definition of what it means for a transaction to see another.

\begin{definition}[Transaction seeing another]
We say a transaction $T_1$ \emph{sees} that is assigned timestamp $t_1$ sees transaction $T_2$ if $T_2$ is
added to the \var{pending}[$t_1$] set of some correct server. 
\end{definition}

\begin{lemma}[Non-concurrent transactions get different timestamps]\label{lemma:see}
If $\mbox{\var{T}}_1$ does not see $\mbox{\var{T}}_2$ and $t_1$ is assigned a timestamp,
then $\mbox{\var{T}}_2$ is assigned a later timestamp.
\end{lemma}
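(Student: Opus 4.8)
The plan is to prove the contrapositive-flavored statement directly: assuming $T_1$ (with assigned timestamp $t_1$) does not see $T_2$, show that $T_2$'s assigned timestamp $t_2$ satisfies $t_2 > t_1$. First I would unpack what ``$T_1$ does not see $T_2$'' means via the definition: $T_2$ is not in $\var{pending}[t_1]$ at \emph{any} correct server. Then I would unpack how $\var{pending}[t_1]$ gets populated: at some correct server, when $|\var{ConfirmWitness}[T_1]| = n-f$, the set is set to $\bigcup_{s \in \var{ConfirmWitness}[T_1]} \{T' : \exists\, k \le t_1,\ (T',k) \in \var{proposed}[s]\}$ (Figure~\ref{fig:server}, Line~\ref{line:pending-update}). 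So the key is to understand which transactions are guaranteed to be captured by this union.

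The core argument is a quorum-intersection / message-ordering argument. Since $T_2$ is assigned timestamp $t_2$, some correct server received $\kConfirm(T_2,t_2)$ from the client, but more to the point, the client that ran $T_2$ computed $t_2$ from $n-f$ \kProposeAck{} responses, so at least $n-f$ servers received $\kPropose(T_2)$ and recorded $(T_2, \var{clock})$ in their own $\var{proposed}$ set and forwarded $\var{Proposed}(T_2,\cdot)$ to everyone. Meanwhile, the correct server that computed $\var{pending}[t_1]$ did so after collecting $\kConfirm(T_1,t_1)$ messages from a set $W = \var{ConfirmWitness}[T_1]$ of $n-f$ servers. With $n = 4f+1$, any two sets of size $n-f = 3f+1$ intersect in at least $n - 2f = 2f+1 \ge f+1$ servers, hence in at least one correct server $s^*$. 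The plan is to argue: if $T_2$ was proposed ``early enough'' — specifically, if the \var{Proposed}$(T_2,k)$ message from $s^*$ (or $s^*$'s own local proposal) was recorded in $\var{proposed}[s^*]$ \emph{before} $s^*$ forwarded its $\kConfirm(T_1,t_1)$ — then $T_2 \in \var{pending}[t_1]$, using the in-order-delivery assumption (a correct server receiving $s^*$'s forwarded \kConfirm{} has received all of $s^*$'s earlier forwarded \kPropose{}/\var{Proposed} messages). So $T_1$ not seeing $T_2$ forces the \var{Proposed}$(T_2,k)$ to have reached $s^*$'s record \emph{after} $s^*$ sent $\kConfirm(T_1,t_1)$, which means $s^*$'s clock at the time of recording $(T_2,k)$ was already $\ge t_1$ (since processing $\kConfirm(T_1,t_1)$ sets $\var{clock} \ge t_1$ at Line~\ref{line:clock}, and clocks are monotone). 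But the value $k$ recorded alongside $T_2$ is precisely that clock value, so $k \ge t_1$; and the ``$\exists k \le t_1$'' filter in the pending-set formula is what excluded it. The final step ties $k$ to $t_2$: I would invoke the non-skipping timestamp mechanism / the clock semantics to conclude that since $k$ (a clock value recorded when $T_2$ was proposed) satisfies $k \ge t_1$, the timestamp $t_2$ the client ultimately picks — being $1$ plus the $(f+1)^{\text{st}}$ largest of $n-f$ \kProposeAck{} values, and at least one of those being from a correct server whose recorded clock was $\ge t_1$... — hmm, this needs care, because the $(f+1)^{\text{st}}$ largest could be small.

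Let me reconsider that last link, since it is the main obstacle. The cleaner route is: if $s^*$ recorded $(T_2, k)$ with $k \ge t_1$, that means $s^*$'s clock was $\ge t_1$ when it received $\kPropose(T_2)$. Actually I want the opposite direction of information flow. Let me instead argue about when the client for $T_2$ collected its acks. If $T_1$ does not see $T_2$, I claim every correct server that sent a \kProposeAck{} for $T_2$ did so with clock value $\ge t_1$ — or at least that the $(f+1)^{\text{st}}$ largest such value is $\ge t_1-1$, wait. Honestly the cleanest framing: among the $n-f$ servers that acked $T_2$'s propose, and the $n-f = \var{ConfirmWitness}[T_1]$ servers, there are at least $2f+1$ in common, of which at least $f+1$ are correct. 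For each such correct common server $s$, either it recorded $T_2$ in its $\var{proposed}$ set before forwarding $\kConfirm(T_1, t_1)$ (then $T_2$ would be seen, contradiction, as long as its recorded clock $k \le t_1$) or after (then its recorded clock $k \ge t_1$, since it had processed $\kConfirm(T_1,t_1)$). Either way — combining with $T_1$ not being seen — for each of these $\ge f+1$ correct common servers, the \kProposeAck{} value returned for $T_2$ is $\ge t_1$ (it equals the clock at receipt of $\kPropose(T_2)$, which is $\ge k \ge t_1$... careful: the ack clock is the clock when \kPropose{} arrived, and the recorded $k$ is the same value, and I argued $k \ge t_1$). Hence at least $f+1$ of $T_2$'s \kProposeAck{} values are $\ge t_1$, so the $(f+1)^{\text{st}}$ largest is $\ge t_1$, so $t_2 \ge t_1 + 1 > t_1$. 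So the main obstacle is getting the bookkeeping exactly right on \emph{which} clock value is the ack value versus the recorded $k$ versus the effect of $\kConfirm(T_1,t_1)$, and handling the edge case where $s^*$ receives $\kPropose(T_2)$ and forwards $\kConfirm(T_1,t_1)$ essentially ``simultaneously'' in its event loop — this is resolved by noting each is a discrete event with a well-defined order and using the in-order delivery plus the fact that the pending-set formula uses the ``$\le \hat t$'' filter on the recorded $k$. I would also need to confirm the pending set is computed by \emph{some} correct server — but since $\var{ConfirmWitness}[T_1]$ reaches size $n-f > 0$ includes correct servers who themselves do the computation, and we only need it at one correct server to derive ``seen,'' the negation (not seen at any correct server) is exactly what drives the contradiction at each correct common server. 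The whole argument is essentially one quorum intersection plus the monotone-clock/in-order-delivery invariant, and I expect writing it to take about a page.
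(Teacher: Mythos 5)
Your proposal is correct and follows essentially the same route as the paper's proof: a quorum intersection between the $n-f$ servers acknowledging $T_2$'s \kPropose{} and the $n-f$ witnesses of $T_1$'s \kConfirm{}, yielding at least $f+1$ correct servers whose \kProposeAck{} values for $T_2$ are at least $t_1$, hence $\hat{t}_2 \ge t_1+1$. Your extra care about the recorded clock value $k$ versus the $k \le \hat{t}$ filter in the pending-set formula is a minor tightening of a step the paper states more loosely, not a different argument.
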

\begin{proof}
Suppose $\mbox{\var{T}}_1$  and $\mbox{\var{T}}_2$  are assigned timestamps $\hat{t}_1$ and  $\hat{t}_2$ respectively.
When a correct server receives $\mbox{\var{T}}_1$'s \kConfirm{} message, its clock value will be at least $\hat{t}_1$ (Line~\ref{line:clock}). 
Since $\mbox{\var{T}}_1$ is assigned a timestamp, its \kConfirm{}
message is received by some correct server,
and eventually all correct server will receive $\mbox{\var{T}}_1$'s
\kConfirm{} message.
It follows that that there will be enough witnesses to
$\mbox{\var{T}}_1$'s \kConfirm{} message and \var{pending}[$t_1$] will
be calculated by every correct server (Line~\ref{line:pending-update}), 
When \var{pending}[$t_1$] is calculated, a set of $n-f$ servers, which we will call $S_1$, has confirmed the  
receipt of $\mbox{\var{T}}_1$'s \kConfirm{} message with timestamp $\hat{t_1}$.  
Because $\mbox{\var{T}}_1$  did not see $\mbox{\var{T}}_2$ ,
$\mbox{\var{T}}_2$'s \kPropose{} message could not have preceded $\mbox{\var{T}}_1$'s
\kConfirm{} message at any correct sever in $S_1$.

Let $S_2$ be the set of servers that acknowledge $\mbox{\var{T}}_2$'s \kPropose{} messages. 
Every correct server in $S_2$,  added $\mbox{\var{T}}_2$  to the \var{proposed} set and returned its
	local clock value to the client (Line~\ref{line:propose-ack}).
Because $\mbox{\var{S}}_1$  and $\mbox{\var{S}}_2$  both have size at least $n-f$,
their intersection includes at least $f + 1$ correct servers.
At each of these servers, $\mbox{\var{T}}_2$'s \kPropose{} message
arrived after $\mbox{\var{T}}_1$'s \kConfirm{} message,
so $\mbox{\var{T}}_2$  received at least $f+1$ \kProposeAck{} messages with timestamp
values greater than or equal to $\hat{t}_1$,
so $\mbox{\var{T}}_2$  will choose $\hat{t}_2$ greater than $\hat{t}_1$.
\end{proof}

It follows that, for a given transaction $\mbox{\var{T}}$  with timestamp $\hat{t}$, the set $\mbox{\var{pending}}[t]$ at a correct server contains only transactions that can be assigned timestamp $\hat{t}$ or less because any transaction not in the set will be assigned a timestamp larger than $t$.  But we need to be sure that if $\mbox{\var{pending}}[t]$ at a given correct server is not empty, then it will contain all transaction that could ever be in $\mbox{\var{confirmed}}[t']$, $t' \leq t$. This would allow the server to detect when all conflicting transactions with timestamp $t$ or less have been resolved (\var{COMMIT} or \var{CANCEL}), at which time it can execute those among them that have not been cancelled. The following lemma establishes this fact.

\begin{lemma}[\var{pending} has all the information needed to resolve transactions]
At all times, $\bigcup_{t' < t} \mbox{\var{confirmed}}[t']  \subseteq P$, where $P = \mbox{\var{pending}}[t]$ is a non-empty $\mbox{\var{pending}}[t]$ set calculated by a correct server.
\end{lemma}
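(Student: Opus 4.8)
The plan is to argue by contradiction, combining a quorum‑intersection count with the same clock‑monotonicity observation used in Lemma~\ref{lemma:see}. Fix the correct server $r$ whose pending set is $P = \mbox{\var{pending}}[t]$, and let $T_0$ be the transaction whose \kConfirm{} processing triggered the computation of $P$; this happens at the first moment when a set $W$ of $n-f$ servers have confirmed $T_0$ with timestamp $t$ and $r$ has added all of them to $\mbox{\var{ConfirmWitness}}[T_0]$, and by construction $P = \bigcup_{s\in W}\{T' : \exists\, k\le t,\ (T',k)\in\mbox{\var{proposed}}[s]\}$ as known to $r$ at that moment. Now let $T''$ be an arbitrary transaction that is, at some time, confirmed at some correct server with timestamp $t'' < t$; we must show $T''\in P$. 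Let $S''$, $|S''| = n-f$, be the set of servers whose \kProposeAck{} replies $T''$'s client collected before sending its \kConfirm{}; since that \kConfirm{} precedes any confirmation of $T''$, the set $S''$ is already fixed.

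First, the counting. Since $|W| = |S''| = n-f$ and $n \ge 4f+1$, the intersection $W\cap S''$ contains at least $n-2f$ servers, hence at least $n-3f \ge f+1$ correct ones. Suppose, for contradiction, that $T''\notin P$, so no $s\in W$ contributes $T''$ to the union above. Fix a correct $s\in W\cap S''$. Because $s\in S''$ and $s$ is correct, on receiving $T''$'s \kPropose{} message it inserted $(T'',k_s)$ into its own \var{proposed} set, where $k_s$ is $s$'s \var{clock} value at that instant, and it forwarded $\mbox{\var{Proposed}}(T'',k_s)$; this is the only entry for $T''$ that $s$ ever puts into its own \var{proposed} set. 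Since $s$ does not contribute $T''$ to $P$, either (a) $r$ had not received $s$'s $\mbox{\var{Proposed}}(T'',k_s)$ message by the time it computed $P$, or (b) $r$ had received it but $k_s > t$.

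The key step rules out case (a) being harmless. By FIFO delivery on the channel $s\to r$, if $r$ had received $s$'s \kConfirm{} message for $T_0$ (which it had, since $s\in W$) but not $s$'s $\mbox{\var{Proposed}}(T'',k_s)$, then $s$ sent the \kConfirm{} before the \var{Proposed}, hence $s$ processed $T_0$'s \kConfirm{} message --- raising its clock to at least $t$ by Line~\ref{line:clock} --- before processing $T''$'s \kPropose{} message, so $k_s\ge t$ in case (a) as well. Thus $k_s\ge t$ for every correct $s\in W\cap S''$, and there are at least $f+1$ such servers, so at least $f+1$ of the $n-f$ timestamp values $T''$'s client collected are $\ge t$; hence the $(f+1)^{\textrm{st}}$ largest of them is $\ge t$, and the timestamp $t''$ that $T''$'s client computes --- one more than that value (Line~\ref{line:ts}) --- is $\ge t+1$, contradicting $t''<t$. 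Therefore $T''\in P$, and since a non‑empty $\mbox{\var{pending}}[t]$ set is never recomputed, the inclusion holds at all times.

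I expect the obstacle to be exactly the FIFO bookkeeping in the third paragraph: one must argue carefully that ``$r$ has not yet heard $s$'s \var{Proposed} message for $T''$ when $P$ is computed'' really does force ``$s$ had not yet processed $T''$'s \kPropose{} when it forwarded $T_0$'s \kConfirm{}'', which relies on correct servers handling events one at a time and on per‑channel FIFO delivery --- the assumptions restated at the start of Section~\ref{sec:proof}. Everything else is the quorum‑intersection/clock reasoning of Lemma~\ref{lemma:see}, moved from the union of all correct servers' pending sets to one fixed server's set; incidentally, the same argument actually yields the marginally stronger $\bigcup_{t'\le t}\mbox{\var{confirmed}}[t']\subseteq P$.
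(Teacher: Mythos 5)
Your proof is correct and takes essentially the same route as the paper's: intersect the $n-f$ confirm-witness set with the $n-f$ servers that acknowledged the other transaction's \kPropose{}, obtain at least $f+1$ correct servers in the intersection, and combine clock monotonicity (Line~\ref{line:clock}) with FIFO forwarding of \var{Proposed} before \kConfirm{}. The paper phrases it in the contrapositive direction (a transaction confirmed with timestamp $t'\le t$ must have $f+1$ correct low-clock acknowledgers, one of which is a witness and forwarded the \kPropose{} first), whereas you argue by contradiction using all $f+1$ intersection servers; the two are interchangeable, and your remark that the inclusion extends to $t'\le t$ matches the paper's own informal claim preceding the lemma.
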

\begin{proof}
The lemma states that any transaction that will ever be confirmed with timestamp less than or equal to $t$ must be in every $\mbox{\var{pending}}[t]$ set calculated by a correct server. The proof is similar to the proof of~\ref{lemma:see}.
Let \var{T} be a transaction that is assigned timestamp $t$ and $\mbox{\var{pending}}[t]$ by the pending set calculated by a correct server on the receipt of \var{T} \kConfirm{} message when there are enough witnesses to
the \kConfirm{} message. If a transaction \var{T'} will be assigned timestamp $t' \leq t$, then at most $f$ servers reply to the client's \kPropose{} message with timestamp value greater than $t'-1$. Of the remaining $n-2f$ servers, at most $f$ are faulty, which leaves $f+1$ correct servers that reply to the propose message with a timestamp less than $t'$. 
One of these $f+1$ servers must forward \var{T}'s \kConfirm{} message and must have forwarded the \kPropose{} message before that. It follows that  $\mbox{\var{pending}}[t]$ must contain transaction \var{T}.
\end{proof}



\subsection{Safety}

We start by showing that all correct servers execute conflicting
transactions in the same order,
and that all correct servers execute the same transactions.
Then we show that the execution at each correct
server is equivalent to a ``canonical'' execution in which transactions
(even non-conflicting ones) are executed according to timestamp order
(with ties broken canonically).
This implies that correct servers return the same result for each transaction.
Then we show that the execution is linearizable by showing that it is
equivalent to an execution in which each transaction is executed
atomically at some point between its invocation and response.

\begin{lemma}[Same order for applied conflicting transactions]
If two correct servers apply two conflicting non-cancelled  transactions $\mbox{\var{T}}_1$ and $\mbox{\var{T}}_2$  to the log, they apply them in the same order.
\end{lemma}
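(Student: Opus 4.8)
The plan is to prove that any two correct servers agree on the relative order in which they append two conflicting, non-cancelled transactions $\mbox{\var{T}}_1$ and $\mbox{\var{T}}_2$ to their logs. First I would observe that, since both transactions are applied to the log and are non-cancelled, each must have been committed and confirmed, hence each has an assigned timestamp; call them $t_1$ and $t_2$. By the \emph{Same confirmed timestamp} lemma, these timestamps are the same at every correct server, so it suffices to reason about a single pair $(t_1,t_2)$. I would then split into two cases according to whether $t_1 = t_2$ or $t_1 \neq t_2$.

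In the case $t_1 \neq t_2$ — say $t_1 < t_2$ without loss of generality — the key is to show that every correct server applies $\mbox{\var{T}}_1$ before $\mbox{\var{T}}_2$. A server applies $\mbox{\var{T}}_2$ (via \var{ApplyResolvedTransaction}) only when, for every $\mbox{\var{T}}' \in \mbox{\var{pending}}[t_2]$ that conflicts with $\mbox{\var{T}}_2$, that $\mbox{\var{T}}'$ is cancelled, already in the log, or ordered before $\mbox{\var{T}}_2$ by \var{OrderBefore}. By the lemma that $\mbox{\var{pending}}$ contains everything confirmed with timestamp $\leq t_2$, and since $t_1 < t_2$, we have $\mbox{\var{T}}_1 \in \mbox{\var{pending}}[t_2]$; since $\mbox{\var{T}}_1$ conflicts with $\mbox{\var{T}}_2$ and is not cancelled, the only way the guard can be satisfied is if $\mbox{\var{T}}_1$ is already in the log. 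Hence $\mbox{\var{T}}_1$ precedes $\mbox{\var{T}}_2$ in that server's log, which is exactly what we want. (I would also need to note \var{pending}[$t_2$] is non-empty at the point $\mbox{\var{T}}_2$ is applied, which holds because $\mbox{\var{T}}_2$'s own confirmation triggers its computation.)

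In the case $t_1 = t_2 = t$, both transactions lie in \var{confirmed}[$t$] at every correct server once their \kConfirm{} messages propagate, and the tie is broken by the canonical hash order inside \var{OrderBefore}. I would argue symmetrically: whichever of $\mbox{\var{T}}_1, \mbox{\var{T}}_2$ is canonically smaller, say $\mbox{\var{T}}_1$, appears in $\mbox{\var{pending}}[t]$ from the other's viewpoint (again by the $\mbox{\var{pending}} \supseteq \bigcup_{t'\le t}\mbox{\var{confirmed}}[t']$ lemma, noting the non-strict inclusion is what is needed here — I should double-check the lemma statement covers $t' = t$, or else supplement it with the observation that $\mbox{\var{T}}_1$'s \kPropose{} necessarily precedes $\mbox{\var{T}}_2$'s \kConfirm{} at enough witnesses), so a server cannot apply $\mbox{\var{T}}_2$ until $\mbox{\var{T}}_1 \in \mbox{\var{log}} \vee \var{OrderBefore}(\mbox{\var{T}}_2,\mbox{\var{T}}_1)$; since $\var{OrderBefore}(\mbox{\var{T}}_2,\mbox{\var{T}}_1)$ is false when $\mbox{\var{T}}_1$ is canonically smaller, $\mbox{\var{T}}_1$ must already be in the log. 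The same argument run from $\mbox{\var{T}}_1$'s viewpoint shows the guard for applying $\mbox{\var{T}}_1$ is not blocked by $\mbox{\var{T}}_2$, so the order is consistent.

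The main obstacle I anticipate is the edge case $t_1 = t_2$: the stated \var{pending} lemma is phrased with a strict inequality $\bigcup_{t' < t}\mbox{\var{confirmed}}[t'] \subseteq \mbox{\var{pending}}[t]$, so it does not directly guarantee that a same-timestamp conflicting transaction shows up in the other's pending set. I would need to close this gap — either by appealing to the quorum-intersection argument from Lemma~\ref{lemma:see} (the $S_1 \cap S_2$ reasoning) to show that when $\mbox{\var{T}}_2$ does not see $\mbox{\var{T}}_1$ it must get a strictly larger timestamp, so $t_1 = t_2$ forces mutual visibility, or by sharpening the \var{pending} lemma to $\bigcup_{t' \le t}$. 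A secondary subtlety is making sure the \var{OrderBefore} predicate is well-defined at the moment of application (both transactions confirmed, timestamps known), which follows because a server that has applied a transaction has certainly confirmed it.
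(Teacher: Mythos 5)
Your proof is correct and follows the same essential route as the paper's: both arguments reduce agreement on application order to the observation that the apply guard forces conflicting non-cancelled transactions into the log in the canonical \var{OrderBefore} order, which is server-independent because assigned timestamps agree at all correct servers (the \emph{Same confirmed timestamp} lemma) and hash ties are broken deterministically. Your version is considerably more careful than the paper's one-line assertion that ``\var{OrderBefore}($T_1$,$T_2$) must be true,'' and the worry you flag about the equal-timestamp case is legitimate: the pending-set lemma as stated uses the strict inclusion $\bigcup_{t'<t}\mbox{\var{confirmed}}[t']\subseteq \mbox{\var{pending}}[t]$, which does not by itself place a same-timestamp conflicting transaction in the other's pending set. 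Your proposed repair works: the contrapositive of Lemma~\ref{lemma:see} shows that a transaction absent from \var{pending}[$t_1$] must receive a strictly larger timestamp, so $t_1=t_2$ forces mutual visibility; equivalently, the proof of the pending-set lemma in the paper actually establishes the inclusion for all $t'\le t$, and sharpening its statement to the non-strict form closes the gap cleanly.
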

\begin{proof}
If $\mbox{\var{T}}_1$ and $\mbox{\var{T}}_2$ are two conflicting
non-cancelled transactions and $\mbox{\var{T}}_1$ is applied before
$\mbox{\var{T}}_2$, then
\var{OrderBefore}($\mbox{\var{T}}_1$,$\mbox{\var{T}}_2$) must be \emph{true},
implying that that $\mbox{\var{T}}_1$'s assigned  timestamp  is
smaller than that of $\mbox{\var{T}}_2$,
or that both are assigned the same timestamp and $\mbox{\var{T}}_2$
appears before $\mbox{\var{T}}_2$ in the canonical order.
This property holds at every correct server,
so all correct servers that apply both transactions will apply them in the same order.
\end{proof}

\begin{lemma}[Agreement on  transaction resolution]
If a correct server decides to commit or cancel a transaction,
then every correct server eventually makes the same decision.
\end{lemma}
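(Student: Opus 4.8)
The plan is to show that the decision to commit or cancel a transaction $T$ is always made through the binary consensus protocol invoked inside some \var{ResolveThread}$(T,\hat t,\cdot)$, and then to invoke the agreement and termination guarantees of that underlying consensus protocol. First I would observe that, by inspection of the code, a correct server adds $T$ to \var{committed} only if its \var{ResolveThread} instance for $T$ decided $1$, and adds $T$ to \var{cancelled} only if its \var{ResolveThread} instance for $T$ decided $0$ (Figure~\ref{fig:resolve}). So it suffices to show (a) that whenever one correct server runs \var{ResolveThread}$(T,\cdot,\cdot)$ and its consensus instance decides, every correct server also runs its own \var{ResolveThread}$(T,\cdot,\cdot)$ and participates in the same consensus instance, and (b) that the consensus instance delivers the same decision value to all of them. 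Part (b) is exactly the agreement property of the off-the-shelf Byzantine consensus algorithm (which, by assumption, tolerates $f<n/4$), so the real content is part (a): \emph{propagation} of participation.

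For the propagation step I would argue as follows. When a correct server's \var{ResolveThread}$(T,\hat t,\mathit{code})$ runs, its first action is to \var{send StartResolution}$(T,\hat t,\mathit{code})$ to all servers (Line~\ref{line:start-resolution}); since this server is correct, every correct server eventually receives that message. Now I need to show that this suffices to make every correct server fork its own \var{ResolveThread}$(T,\cdot,\cdot)$ and join the \emph{same} consensus instance. There are two cases, matching the two branches in Lines~\ref{line:startResolution}--\ref{line:startResolution:end}. If the first correct server to decide had $\mathit{code}=\var{COMMIT}$, then every correct server that receives its \var{StartResolution}$(T,\hat t,\var{COMMIT})$ message immediately forks \var{ResolveThread}$(T,\hat t,\var{COMMIT})$ and joins the consensus instance with initial value $1$. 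If instead $\mathit{code}=\var{CANCEL}$, a correct server only joins after receiving $f+1$ distinct \var{StartResolution}$(T,\cdot,\var{CANCEL})$ messages; here I would use the standard quorum argument — a correct server forks a \var{CANCEL} resolve thread only after its timer expired for the relevant timestamp (Lines~\ref{line:resolve:end}), and if even one correct server reaches a decision value of $0$ in the consensus, then by the \emph{validity} property of binary consensus some correct server must have proposed $0$, hence some correct server's timer expired and it sent \var{StartResolution}$(T,\cdot,\var{CANCEL})$; combined with the resolve threads forked by all correct servers that receive that message, one gets at least $f+1$ such messages reaching every correct server, so every correct server eventually forks its own \var{CANCEL} thread and joins. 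In either case I must also check that all correct servers join the \emph{same} consensus instance for $T$ — this follows because the instance is keyed by the transaction identifier (and, via the ``Same confirmed timestamp'' lemma, the confirmed timestamp $\hat t$ is unambiguous when it appears).

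The main obstacle I expect is the interaction between the two initial values: a correct server may fork \var{ResolveThread}$(T,\cdot,\var{COMMIT})$ (initial value $1$) while another correct server, whose timer expired, forks \var{ResolveThread}$(T,\cdot,\var{CANCEL})$ (initial value $0$), so the consensus instance legitimately sees both $0$ and $1$ proposed by correct servers. This is fine for agreement — consensus still delivers a single value — but I must be careful that (i) \emph{every} correct server eventually participates (so the consensus protocol's termination precondition is met: in periods of synchrony all correct servers are live participants), and (ii) a server that only learns of $T$ through a \var{StartResolution} message, having never seen the client's \kConfirm{}, can still participate meaningfully; the code handles this precisely because \var{StartResolution} carries $\hat t$ and the transaction, so a latecomer can join without any client contact. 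Once all correct servers are shown to be participants in the one consensus instance for $T$, agreement of that instance finishes the proof, and termination of that instance (under eventual synchrony) gives the ``eventually'' in the statement.
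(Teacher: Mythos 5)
Your core reduction --- that a correct server's decision to commit or cancel $T$ is exactly the decision of the binary consensus instance run in its \var{ResolveThread}, so that agreement follows from the agreement property of the off-the-shelf consensus protocol --- is precisely the paper's proof; the paper's version consists of essentially those two sentences and nothing more, because it has already stipulated in the model restatement at the top of Section~\ref{sec:proof} that all correct servers invoke \var{ResolveThread} and all add $T$ to the same set. You are right that the lemma as stated also needs every correct server to \emph{participate and terminate}, not merely to agree conditionally on deciding, so your attempt to prove propagation of participation is a reasonable strengthening rather than a digression; it is the one part of your argument that goes beyond what the paper writes down, and it is also where the problem lies.

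The propagation argument for the \var{CANCEL} case is circular. From one correct server timing out you obtain exactly one \var{StartResolution} message for $T$ with code \var{CANCEL} from a correct sender; a correct server that receives it does \emph{not} fork a resolve thread --- it only records the sender in \var{CancelWitness}[$T$] and waits for $f+1$ distinct senders (Lines~\ref{line:startResolution}--\ref{line:startResolution:end}). Your phrase ``combined with the resolve threads forked by all correct servers that receive that message, one gets at least $f+1$ such messages'' assumes exactly the forking you are trying to establish, and the $f$ faulty servers cannot be counted on to supply the missing witnesses. A non-circular route has to show that \emph{many} correct servers independently reach the timer path: once \var{pending}[$t$] is computed at one correct server, the \kConfirm{} message that triggered it has been forwarded by at least $n-2f$ correct servers, each of which reforwards it, so every correct server eventually accumulates $n-f$ witnesses, computes its own \var{pending}[$t$], and starts its own timer; then either $T$ is eventually confirmed at some correct server (and the \var{COMMIT} branch makes everyone join), or all correct servers holding $T$ in their pending sets time out and each sends its own \var{StartResolution} for $T$ with code \var{CANCEL}, giving $n-2f \geq f+1$ distinct correct senders and triggering the quorum condition everywhere. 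Even this sketch needs care (one must check that $T$ actually appears in the pending set of every correct server, not just the one that decided), which is presumably why the paper folds the whole issue into a standing assumption rather than proving it inside this lemma.
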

\begin{proof}
The decision to commit or cancel a transaction depends on the decision
of the consensus protocol initiated in the \var{ResolveThread} (Figure!\ref{fig:resolve}).
By the agreement property of consensus, any two correct servers will have the same decision value.
\end{proof}

\begin{definition}[Canonical execution]
A \emph{canonical execution} is one that contains all transactions
applied to the log of some correct server ordered by first by
timestamp, then by transaction hash code.
Cancelled transactions with unknown timestamp are assumed to be
assigned the smallest timestamp for which they appear in a pending set of a correct server. 
\end{definition}
Since all correct servers execute the same sequence of transactions,
this definition is independent of the particular correct server chosen.

\begin{lemma}[Every execution is equivalent to the canonical execution]
The responses of transactions as applied to the log of a correct server is identical to the responses of the same transactions as they appear in the canonical execution.
\end{lemma}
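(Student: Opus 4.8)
The plan is to show that the log of a correct server and the canonical execution agree on the return value of every transaction by exhibiting the former as a "reordering" of the latter that only moves commuting (non-conflicting) transactions past one another. By the previous lemma (Same order for applied conflicting transactions), any two conflicting non-cancelled transactions appear in the same relative order in every correct server's log; and by the canonical-execution definition, conflicting transactions are ordered there by timestamp-then-hash, which is exactly the \var{OrderBefore}() relation used when a server appends a conflicting transaction. So the key claim to establish is: for any pair of transactions $T_1, T_2$ that both appear in a correct server's log, if they are ordered one way in that log and the opposite way in the canonical execution, then $T_1$ and $T_2$ do not conflict. Cancelled transactions carry no state effect, so they can be placed anywhere consistent with the canonical timestamp convention without changing any result; hence it suffices to argue about non-cancelled transactions.

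First I would recall the standard commutativity fact for key-value stores: if $\neg\mathrm{conflict}(T_1,T_2)$ then applying $T_1$ then $T_2$ to any state yields the same final state and the same two return values as applying $T_2$ then $T_1$; more generally, in any sequence, two adjacent non-conflicting transactions may be transposed without affecting any transaction's result. Then I would argue that the log of a correct server can be transformed into the canonical execution by a finite sequence of such adjacent transpositions of non-conflicting transactions. The obstruction-freedom condition in \var{ApplyResolvedTransaction}() (Figure~\ref{fig:apply}, Line~\ref{line:termination}) is what makes this work: a transaction $T\in\var{committed}\cap\var{confirmed}[t]$ is appended only once every $T'\in\var{pending}[t]$ is non-conflicting with $T$, cancelled, already in the log, or satisfies \var{OrderBefore}($T$,$T'$). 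By the lemma "\var{pending} has all the information needed to resolve transactions," $\var{pending}[t]$ contains every transaction that will ever be confirmed with timestamp $\le t$, and by Lemma~\ref{lemma:see} every transaction \emph{not} in $\var{pending}[t]$ gets a strictly larger timestamp. Hence when $T$ is appended, every not-yet-logged transaction that conflicts with $T$ either already has a known timestamp that is larger (so it is canonically after $T$), or is cancelled. Consequently, whenever the log order of two conflicting non-cancelled transactions is fixed, it already coincides with their canonical order; so the only disagreements between the log and the canonical order are among non-conflicting pairs.

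Next I would make the transposition argument precise by induction on the number of "inversions" between the correct server's log (as a prefix, or as the limiting infinite sequence) and the canonical execution — pairs $(T_1,T_2)$ appearing in opposite relative order in the two sequences. Each such inverted pair is non-conflicting, by the previous paragraph. A minimal inversion can be realized as an adjacent pair in one of the two orderings (again because conflicting transactions are already consistently ordered, so the "blocking" transactions between an inverted pair can be shown to be transposable first), and transposing it strictly decreases the inversion count while changing no return value, by the commutativity fact. Iterating (and, for the infinite log, applying the argument prefix-by-prefix, noting that the relative order of any fixed finite set of transactions stabilizes) turns the correct server's execution into the canonical one without altering any transaction's result. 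Therefore every transaction's response in the correct server's log equals its response in the canonical execution.

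The main obstacle I anticipate is the bookkeeping around transactions with \emph{unknown} timestamps — cancelled transactions, and transactions that are "stuck" in a pending set at one server but resolved differently in timing at another. The canonical-execution definition pins a cancelled-with-unknown-timestamp transaction to the smallest timestamp at which it appears in some correct server's pending set, and I would need to check that this choice is (i) well defined and (ii) never forces a cancelled transaction to be placed \emph{before} a conflicting committed transaction in a way that contradicts the log order — which is fine precisely because cancelled transactions have no state effect, so "conflict" with them is immaterial to results. The second delicate point is the passage from finite prefixes to the full (possibly infinite) log: I must argue that the reordering is \emph{locally finite}, i.e. each transaction is moved past only finitely many others, which follows because a transaction with timestamp $t$ can only be reordered relative to transactions whose timestamps lie within the bounded window determined by the non-skipping property and the pending-set bound. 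Making that window explicit is the one place where a little care, rather than routine calculation, is required.
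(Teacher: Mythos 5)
Your proposal is correct and follows essentially the same route as the paper: both arguments reduce the log to the canonical order by adjacent transpositions of out-of-order pairs, observing that any such pair must be non-conflicting or involve a cancelled transaction (since conflicting committed transactions are already consistently ordered by the previous lemma), and then induct on the number of inversions using commutativity of non-conflicting transactions. Your additional care about cancelled transactions with unknown timestamps and about local finiteness of the reordering fills in details the paper leaves implicit, but does not change the structure of the argument.
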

\begin{proof}
We consider an execution $E$ of a correct server in which transactions are applied as allowed by the protocol.
We say that two transactions $\mbox{\var{T}}_1$ and $\mbox{\var{T}}_2$
are \emph{out-of-order} if $\mbox{\var{T}}_1$ appears before
$\mbox{\var{T}}_2$ in $E$ but $\mbox{\var{T}}_2$ appears before
$\mbox{\var{T}}_1$ in the canonical execution. We prove by induction
on the number of out-of-order pairs of transactions that $E$ is
equivalent to the canonical execution. For the base case, there are no
out-of-order transactions and $E$ is the canonical execution. For
the induction step, since $E$ has at least one out-of-order pair
of transactions, it must have to successive transactions, say
$\mbox{\var{T}}_1$ and $\mbox{\var{T}}_2$, that are out of
order. Consider the execution \var{E'} which is identical to var{E'}
except that $\mbox{\var{T}}_2$ appears before $\mbox{\var{T}}_1$ in
\var{E'}. We show that  
\var{E'} and $E$ are equivalent. Also, compared to $E$, \var{E'} has exactly one less pair of transaction that are out-of-order.  It follows by the induction hypothesis \var{E'} is equivalent to the canonical execution, so $E$ is also equivalent to the canonical execution.

Since  $\mbox{\var{T}}_1$ and $\mbox{\var{T}}_2$ are out of order in $E$, it follows that either both are committed and do not conflict or at least one of them is cancelled. If one of them is cancelled, it is immediate that \var{E'} and $E$ are equivalent. If $\mbox{\var{T}}_1$ and $\mbox{\var{T}}_2$ do not conflict, we divide \var{E'} into three parts: 
(1) $E_b$ of transactions that appear before
$\mbox{\var{T}}_1$ and $\mbox{\var{T}}_2$, (2) $\mbox{\var{T}}_2$
followed by $\mbox{\var{T}}_1$, and (3) $\mbox{$E$}_a$ of transactions
that appear after $\mbox{\var{T}}_1$ and $\mbox{\var{T}}_2$. The
execution of $\mbox{$E$}_b$  is the same in $E$ and \var{E'}, so the
state of the ledger is identical in $E$ and \var{E'} after executing
$\mbox{$E$}_b$. Since $\mbox{\var{T}}_1$ and $\mbox{\var{T}}_2$ do not
conflict, they access and modify different parts of the state (read
and write sets are disjoint), so executing $\mbox{\var{T}}_2$ followed
by $\mbox{\var{T}}_1$ after $\mbox{$E$}_b$ will result in the same
ledger state and same responses as executing $\mbox{\var{T}}_1$
followed by $\mbox{\var{T}}_2$ $\mbox{$E$}_b$. Finally $\mbox{$E$}_a$
is identical in $E$ and \var{E'} and executed starting from the same
ledger state, so all transactions in $\mbox{$E$}_a$ will return the
same response. 
\end{proof}

\begin{theorem}[Linearizability]
The implementation is linearizable.
\end{theorem}
\begin{proof}
Since all applied transactions are committed and correct servers commit the same set of transactions, 
	it follows that the sets of transactions applied by correct servers are the same. 
The previous lemma shows that the responses of transactions applied by the correct servers is the same.
It remains to define an execution that is consistent with real time order and consistent with the executions of 
	correct servers and such that a transaction takes effect somewhere between its invocation and response in 
	this execution.
For each transaction, we assign the real time at which the first correct server adds the transaction to its log and 
	we define an execution in which transactions are ordered according to this order and in which their responses
	are as they are for the executions of correct servers. 
This execution is consistent with the protocol order. It follows that it is equivalent to the canonical order and 
	therefore equivalent to the execution of every correct server.
Finally, the assigned time for a transaction is somewhere between its invocation and response because a response is	not produced before a correct server adds the transaction to its log.
\end{proof}

\subsection{Progress}

In periods of synchrony we assume there is an upper bound $d$ on
message transmission. For a correct client that sends messages to all
servers, we assume that the maximum delay in the reception of messages
between servers is $d$. We consider processing delay at correct
servers to be negligible.
\begin{lemma}
If $\delta > 2d$, all transactions of correct clients will be committed.
\end{lemma}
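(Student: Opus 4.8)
The plan is to show that when $\delta > 2d$, a correct client's transaction $T$ with assigned timestamp $\hat{t}$ is never cancelled and is eventually applied to the log of every correct server. The argument has two parts: first, that $T$ is committed (its \var{ResolveThread} consensus decides $1$), and second, that the guard in \var{ApplyResolvedTransaction} (Figure~\ref{fig:apply}, Line~\ref{line:termination}) eventually becomes true at every correct server.

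For the first part, I would argue that no correct server ever forks \var{ResolveThread}$(T,\bot,\var{CANCEL})$ for $T$. A correct server only attempts to cancel $T$ when $\var{timer}[t]$ expires with $T$ still in $\var{pending}[t]$ and $T \notin \var{resolving}$ — equivalently, with $T$ not yet confirmed at that server. But $T$ is confirmed (added to $\var{confirmed}[\hat t]$, and also into \var{resolving} via the \kConfirm{} handler which forks \var{ResolveThread}$(T,\hat t,\var{COMMIT})$) by the time the server processes $T$'s \kConfirm{} message. Here is where $\delta > 2d$ enters: $\var{timer}[t]$ is set to $\var{time}+\delta$ only at the moment $\var{pending}[t]$ is first computed (Line~\ref{line:delay}), which happens when the $(n-f)$-th \kConfirm{} witness for $T$ arrives. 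After a correct server broadcasts $T$'s timestamp, the \kConfirm{} message reaches all correct servers within $d$ (direct) or, if the client is slow, the \kConfirm{}–forwarding chain among correct servers completes within $d$ as well; in either case, within $d$ of the first correct server's \kConfirm{} the set $\var{pending}[t]$ is computed at that server, and within a further $d$ every correct server has received $T$'s \kConfirm{} and added $T$ to $\var{resolving}$. Hence with $\delta > 2d$ the timer cannot expire before $T$ is in $\var{resolving}$ at every correct server, so no correct server contributes a \var{StartResolution}$(T,\var{CANCEL})$; since $f+1$ such messages (hence at least one correct server) are required before any server adopts initial value $0$, every correct server runs consensus for $T$ with initial value $1$. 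By validity of the binary consensus protocol (all correct proposers input $1$), the decision is $1$, so $T$ is committed at every correct server.

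For the second part, I would show the applicability guard for $T$ eventually holds. Since $\delta > 2d$ also prevents premature cancellation of every pending transaction $T'$ that is a transaction of a correct client, each such $T'$ is eventually confirmed with a known timestamp; conversely, a $T' \in \var{pending}[t]$ from a crashed client that is never confirmed has its timer expire and is cancelled via the $f+1$-\var{StartResolution}$(\var{CANCEL})$ mechanism (by agreement, at all correct servers). By the "\var{pending} has all the information needed to resolve transactions" lemma, every $T'$ that will ever be confirmed with timestamp $\le t$ is already in $\var{pending}[t]$; by Lemma~\ref{lemma:see}, every other $T' \in \var{pending}[t]$ gets a timestamp $> t$. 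So for each $T'\in\var{pending}[t]$, eventually either $T'\in\var{cancelled}$, or $T'$'s timestamp becomes known and either $\neg\mathrm{conflict}(T,T')$ or \var{OrderBefore} is decidable; an easy induction on the canonical order (using the prefix-agreement safety property and the already-proved same-order lemma) shows every conflicting $T'$ ordered before $T$ is itself applied first. Thus the guard becomes true and $T$ is appended to the log and a \kResolveAck{} with \var{COMMIT} is returned.

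The main obstacle is pinning down the timing argument precisely: carefully bounding, in terms of $d$, the elapsed real time from "some correct server first sets $\var{timer}[t]$" to "every correct server has $T\in\var{resolving}$," accounting for both the direct client-to-server path and the server-to-server \kConfirm{}-forwarding path, and confirming that $2d$ is genuinely the worst case (so that $\delta>2d$ suffices) rather than, say, $3d$ from chained forwarding. The cancellation-of-orphans half and the induction on canonical order are comparatively routine given the supporting lemmas.
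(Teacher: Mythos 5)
Your overall strategy for the core claim is the same as the paper's: show that by the time any timer that could trigger a cancellation of $T$ expires, every correct server has already processed $T$'s \kConfirm{} and placed $T$ in \var{resolving} with a \var{COMMIT} proposal, so no correct server ever initiates or joins a \var{CANCEL} (the $f+1$ \var{CancelWitness} threshold guarantees at least one correct initiator would be needed), and then conclude by validity of the binary consensus. That part is right. But the one step you flag as unresolved is a genuine gap, and your proposed accounting of it is anchored to the wrong events. You tie the start of the timer to the moment $\var{pending}[\hat t]$ is computed upon the $(n-f)$-th \kConfirm{} witness \emph{for $T$ itself}. The timers that threaten $T$ are $\var{timer}[t']$ for \emph{any} $t'$ with $T \in \var{pending}[t']$, and such a set is computed when some \emph{other} transaction with timestamp $t'$ accumulates $n-f$ \kConfirm{} witnesses --- possibly well before $T$ is confirmed anywhere. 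Measuring ``$2d$ from the first \kConfirm{} of $T$'' therefore does not bound the right interval, and as you suspected, chained forwarding makes it unclear that $2d$ suffices from that anchor.

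The paper closes this by anchoring both ends to the client's initiation time $t_1$ rather than to any server-side event. The earliest real time at which $T$ can appear in any \var{proposed} set, hence in any $\var{pending}[t']$ set, is $t_1+d$ (its \kPropose{} needs $d$ to reach any server), so the earliest expiry of any timer covering $T$ is $t_1+d+\delta$. On the other side, every correct server receives $\var{Confirm}(T,\hat t)$ by $t_1+3d$ (one $d$ for \kPropose{}, one for \kProposeAck{}, one for \kConfirm{}) and at that point adds $T$ to \var{resolving} with a \var{COMMIT} resolve thread. Since $\delta>2d$ gives $t_1+d+\delta>t_1+3d$, no correct server's cancellation path can fire first. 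Note also that the lemma only claims the transaction is \emph{committed}; your second part, showing the guard in \var{ApplyResolvedTransaction} eventually holds so that $T$ is \emph{applied}, is not needed here --- it is the content of the paper's separate progress theorem, which is proved from this lemma together with the pending-resolution lemma essentially as you sketch.
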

\begin{proof}
We consider a transaction \var{T} of a correct server that is initiated at real time $t_1$ and 
that is assigned timestamp $\hat{t}$. 
At time $t_1+3d$ every correct server will receive a message \var{Confirm}(\var{T},$\hat{t}$)
	for the transaction. 
The earliest time at which \var{T} can the $\mbox{\var{pending}}[t']$ set of a correct 
	server for some $t'$is $t_1+d$. 
So, the earliest time to propose to \var{CANCEL} \var{T} being 
	$t_1+d+\delta$ (Figure~\ref{fig:server}, Line~\ref{line:delay}). 
Since $\delta > 2d$, a correct server will receive \var{Confirm}(\var{T},$\hat{t}$)
	and adds \var{T} to \var{resolving} before time $t_1+\delta$ and the server will
	propose to \var{COMMIT} the transaction at that time.
So, we have shown that no correct process will propose to \var{CANCEL} a transaction by a correct client  in periods of synchrony.
It follows that the transaction will be committed.
In fact, all but the $t$ faulty servers will propose to \var{COMMIT} the transaction. 
By the validity requirement of the consensus protocol, the decision of all correct servers should be
	to \var{COMMIT}.
\end{proof}

\begin{lemma}[Pending transactions are resolved]\label{lemma:pending-resolved}
If $\mbox{\var{pending}}[t] \neq \emptyset$ at a correct server, all transactions in $\mbox{\var{pending}}[t] $ will eventually be resolved
	at every correct server.
\end{lemma}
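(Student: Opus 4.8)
The plan is to fix a correct server $s$ at which $\mbox{\var{pending}}[t] \neq \emptyset$, fix an arbitrary $\mbox{\var{T}}' \in \mbox{\var{pending}}[t]$ at $s$, and show that every correct server eventually forks a \var{ResolveThread} for $\mbox{\var{T}}'$; agreement and termination (the latter under eventual synchrony) of the underlying binary consensus instance for $\mbox{\var{T}}'$ then imply that either all correct servers add $\mbox{\var{T}}'$ to \var{committed} or all add it to \var{cancelled}, which is exactly what ``resolved'' means. Two joining mechanisms are worth separating up front: a single \var{StartResolution}$(\mbox{\var{T}}',\cdot,\textrm{COMMIT})$ message makes every correct recipient fork a \var{ResolveThread} immediately, whereas joining a cancellation requires seeing $f+1$ distinct \var{StartResolution}$(\mbox{\var{T}}',\cdot,\textrm{CANCEL})$ messages, so a cancellation cascade only gets going once $f+1$ correct servers have each independently launched one.

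I would then split on whether $\mbox{\var{T}}'$ is ever assigned a timestamp. If it is, some correct server receives $\mbox{\var{T}}'$'s \kConfirm{} message, forks \var{ResolveThread}$(\mbox{\var{T}}',\cdot,\textrm{COMMIT})$, and broadcasts the corresponding \var{StartResolution}, so by the first mechanism every correct server forks a \var{ResolveThread} for $\mbox{\var{T}}'$ and we are done. If $\mbox{\var{T}}'$ is never assigned a timestamp, I would first observe that every correct server eventually computes a \var{pending} set for timestamp $t$: $s$ computed $\mbox{\var{pending}}[t]$ only after collecting $n-f$ witnesses to the \kConfirm{} message of some transaction $\mbox{\var{T}}$ with timestamp $t$, and since that \kConfirm{} message is broadcast by the (non-Byzantine) client and then re-forwarded by every correct server, every correct server eventually has $n-f$ such witnesses and computes its own $\mbox{\var{pending}}[t]$. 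The remaining work is to show that $\mbox{\var{T}}'$ lies in the \var{pending} set of at least $f+1$ correct servers; given that, each such server whose timer for that timestamp expires while $\mbox{\var{T}}' \notin \mbox{\var{resolving}}$ forks \var{ResolveThread}$(\mbox{\var{T}}',\bot,\textrm{CANCEL})$ and broadcasts \var{StartResolution}$(\mbox{\var{T}}',\bot,\textrm{CANCEL})$, $f+1$ such messages reach every correct server, the cancellation cascade completes, and every correct server runs a \var{ResolveThread} for $\mbox{\var{T}}'$. (A correct server that finds $\mbox{\var{T}}' \in \mbox{\var{resolving}}$ already is also fine: a \var{ResolveThread} for $\mbox{\var{T}}'$ is already running there.)

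The hard part is precisely the claim that $f+1$ \emph{correct} servers --- not merely $s$ --- end up with $\mbox{\var{T}}'$ in a \var{pending} set. The witness responsible for putting $\mbox{\var{T}}'$ into $s$'s $\mbox{\var{pending}}[t]$ may itself be faulty, and $\mbox{\var{T}}'$'s \kPropose{} message, though genuine and broadcast to all servers, may reach other correct servers only after their \var{clock}s have advanced past $t$, so $\mbox{\var{T}}'$ need not reappear in \emph{their} $\mbox{\var{pending}}[t]$. The way to push through is to argue about the $n-f$ confirm-witnesses for $\mbox{\var{T}}$ directly: at least $n-2f=2f+1$ of them are correct, and by a quorum-intersection argument together with the FIFO delivery assumption one shows that $\mbox{\var{T}}'$ must appear, with a clock value $\le t$, in the \var{proposed} map that a confirm-witness forwarded before its \kConfirm{} for $\mbox{\var{T}}$, at enough correct servers; making ``enough'' equal to $f+1$ is the delicate counting step. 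Two minor points finish the proof: a spurious \var{StartResolution}$(\mbox{\var{T}}',\cdot,\textrm{COMMIT})$ from a faulty server for an orphaned $\mbox{\var{T}}'$ is harmless, since it only causes all correct servers to run---and agree on---a consensus instance for $\mbox{\var{T}}'$; and every timer eventually expires because local \var{time} is unbounded, while termination of the binary consensus instances is available by assumption in periods of synchrony.
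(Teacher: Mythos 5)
Your overall route is the same as the paper's, but the paper's proof is essentially a two-line assertion: every transaction in $\mbox{\var{pending}}[t]$ will be proposed for \var{COMMIT} (Line~\ref{line:COMMIT}) or \var{CANCEL} (Line~\ref{line:resolve:end}), and consensus then resolves it everywhere. Your write-up is a correct and much more honest elaboration of that argument: the case split on whether $\mbox{\var{T}}'$ is ever confirmed, the observation that a single \var{StartResolution}$(\cdot,\textrm{COMMIT})$ suffices to pull in every correct server while cancellation requires $f+1$ independent initiators, and the remark that a spurious \var{COMMIT}-resolution launched by a faulty server is harmless are all accurate readings of the code and are exactly the points the paper's proof skips.

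However, the step you explicitly defer --- showing that at least $f+1$ \emph{correct} servers end up with $\mbox{\var{T}}'$ in an expiring \var{pending} set, so that the cancellation cascade actually reaches everyone --- is a genuine gap, and you do not close it. You gesture at ``a quorum-intersection argument together with FIFO delivery'' but never produce the count, and it is not clear the claim even holds for the fixed timestamp $t$: each correct server computes its own $\mbox{\var{pending}}[t]$ from \emph{its own} first set of $n-f$ confirm-witnesses, these witness sets can differ across servers and each can contain $f$ faulty members, and a faulty witness can plant $(\mbox{\var{T}}',k)$ with $k\le t$ in one server's \var{proposed} map while reporting larger clock values (or nothing) elsewhere. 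So $\mbox{\var{T}}'\in\mbox{\var{pending}}[t]$ at $s$ does not by itself put $\mbox{\var{T}}'$ into $\mbox{\var{pending}}[t]$ at $f$ other correct servers. A workable repair is to argue across timestamps rather than at $t$: since the client's \var{Propose}$(\mbox{\var{T}}')$ is broadcast, every correct server eventually holds $(\mbox{\var{T}}',k_{s''})$ in \var{proposed}$[s'']$ for every correct $s''$, hence $\mbox{\var{T}}'$ appears in $\mbox{\var{pending}}[t'']$ for all sufficiently large $t''$ at every correct server, and \var{CancelWitness}$[\mbox{\var{T}}']$ accumulates \var{StartResolution} messages regardless of which timer triggered them --- but this in turn needs the additional (and unproven) premise that pending sets for later timestamps keep being computed and their timers keep expiring. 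Since the lemma as stated is quantified only over $\mbox{\var{pending}}[t]\neq\emptyset$ at one server, you should either prove that premise or weaken the claim; as it stands both your proposal and the paper's own proof leave the decisive counting argument unestablished.
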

\begin{proof}
Every transaction in $\mbox{\var{pending}}[t] \neq \emptyset$ will be proposed to be committed or cancelled (Figure~\ref{fig:server}, Line~\ref{line:COMMIT} or \ref{line:resolve:end})
	and every correct server will eventually commit or cancel the transaction.
\end{proof}

\begin{theorem}[Progress in periods of synchrony]
In periods of synchrony, all transactions of correct clients are applied.
\end{theorem}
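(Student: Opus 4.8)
The plan is to reduce the theorem to the statement that \emph{every committed transaction is eventually appended to the log of every correct server}, and to prove that by strong induction on the canonical order. The reduction is immediate: by the preceding lemma (which needs $\delta > 2d$, assumed throughout the synchrony analysis), in a period of synchrony every transaction $T$ of a correct client is eventually committed at every correct server, and ``committed'' is server-independent by the Agreement-on-resolution lemma; once committed transactions are known to be applied, we are done. So I would fix a committed transaction $T$ whose assigned timestamp $\hat{t}$ is well defined by the Same-confirmed-timestamp lemma.

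First I would dispose of cancelled transactions, which are also needed as a side case: if $T'$ is cancelled then, by Agreement-on-resolution, every correct server eventually has $T' \in \mbox{\var{cancelled}}$, and the second clause of \var{ApplyResolvedTransaction} then appends $T'$ to the log if it is not already present --- so every cancelled transaction reaches the log of every correct server unconditionally. Now take the committed $T$ with timestamp $\hat{t}$. Since $T$ is assigned $\hat{t}$, some correct server received $\mbox{\var{Confirm}}(T,\hat{t})$ and forwards it, so in a period of synchrony every correct server eventually receives it, adds $T$ to $\mbox{\var{confirmed}}[\hat{t}]$, and re-forwards it; hence $|\mbox{\var{ConfirmWitness}}[T]|$ reaches $n-f$ and (if still empty) $\mbox{\var{pending}}[\hat{t}]$ is computed. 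So every correct server eventually satisfies $T \in \mbox{\var{committed}} \cap \mbox{\var{confirmed}}[\hat{t}]$ with $\mbox{\var{pending}}[\hat{t}] \neq \emptyset$, and the only obstruction to applying $T$ is the universally quantified test over $T' \in \mbox{\var{pending}}[\hat{t}]$ on Line~\ref{line:termination}; by Lemma~\ref{lemma:pending-resolved}, every such $T'$ is eventually resolved at every correct server.

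The heart of the argument is to show that, for each $T' \in \mbox{\var{pending}}[\hat{t}]$ (other than $T$ itself), one of the disjuncts $\neg\mbox{\var{conflict}}(T,T')$, $T' \in \mbox{\var{cancelled}}$, $T' \in \mbox{\var{log}}$, $\mbox{\var{OrderBefore}}(T,T')$ holds \emph{permanently} from some point on. Since $\mbox{\var{pending}}[\hat{t}]$ is a finite set fixed at the moment it is computed, and the sets $\mbox{\var{cancelled}}$, $\mbox{\var{log}}$ and $\mbox{\var{confirmed}}[\cdot]$ only grow, this yields a time after which the whole conjunction on Line~\ref{line:termination} holds, and since \var{ApplyResolvedTransaction} runs on every server-loop iteration, $T$ is then appended. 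To verify the disjunction for a fixed $T'$ with $\mbox{\var{conflict}}(T,T')$: if $T'$ is cancelled, the previous paragraph gives $T' \in \mbox{\var{log}}$ eventually; otherwise $T'$ is committed with some assigned timestamp $t'$, and in synchrony $\mbox{\var{Confirm}}(T',t')$ propagates, so eventually $T' \in \mbox{\var{confirmed}}[t']$ everywhere, whence $\mbox{\var{OrderBefore}}(T,T')$ becomes true when $\hat{t} < t'$, or $t' = \hat{t}$ and $T < T'$ in the canonical (hash) order. In the complementary case ($t' < \hat{t}$, or $t' = \hat{t}$ with $T' < T$), $T'$ strictly precedes $T$ in the canonical order, so the induction hypothesis applies and $T'$ is eventually in the log of every correct server. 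This closes the induction, and applying it to a correct client's (hence committed) transaction proves the theorem.

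I expect the main obstacle to be making the induction and the ``eventually permanent'' stabilization precise: one must argue that the canonical order --- timestamp, then transaction hash, with a cancelled transaction of unknown timestamp placed at the smallest timestamp for which it appears in a correct server's pending set --- is a genuine well-order on the transactions that are ever applied, so that strong induction is legitimate, and one must combine finiteness of $\mbox{\var{pending}}[\hat{t}]$ with monotonicity of the server sets to conclude that the Line~\ref{line:termination} test eventually stays true. The $\mbox{\var{OrderBefore}}$ case split above is the other delicate point: it is exactly what shows that a conflicting pending transaction that is \emph{not} ordered after $T$ must be ordered \emph{before} $T$ canonically, and is therefore covered by the induction hypothesis; and one should not forget that the whole argument inherits the assumption $\delta > 2d$ from the preceding lemma.
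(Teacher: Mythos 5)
Your proposal is correct and follows the same overall route as the paper's proof: reduce to the commit lemma for synchrony (the $\delta > 2d$ lemma), observe that \texttt{pending}$[\hat{t}]$ is eventually computed and contains $T$, invoke Lemma~\ref{lemma:pending-resolved} to resolve every pending transaction, and conclude that the test on Line~\ref{line:termination} eventually passes. The difference is one of rigor rather than strategy: the paper's proof stops at the assertion that ``the termination condition is guaranteed to be satisfied for every transaction that can be applied before $T$,'' which silently presupposes exactly the strong induction on the canonical order that you carry out explicitly --- a conflicting committed $T'$ in \texttt{pending}$[\hat{t}]$ satisfies none of the disjuncts (\texttt{cancelled}, \texttt{log}, \texttt{OrderBefore}) until it has itself been applied, so one must argue that canonical predecessors reach the log first. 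Your case split on \texttt{OrderBefore} versus canonical precedence, your observation that cancelled transactions reach the log unconditionally via the second clause of \texttt{ApplyResolvedTransaction}, and your insistence that the disjunction become \emph{permanently} true (using finiteness of the pending set and monotonicity of the server state) are all steps the paper needs but omits; you also cite the right lemma for committedness, whereas the paper's opening sentence points at Lemma~\ref{lemma:pending-resolved} where it should point at the $\delta > 2d$ commit lemma. The one point you flag as an obstacle --- well-foundedness of the canonical order --- is genuine but benign here, since the recursion only ever descends through finite pending sets.
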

\begin{proof}
  Lemma~\ref{lemma:pending-resolved} shows that all transactions of correct clients will 
	be committed in periods of synchrony. 
It remains to show that transactions of correct clients will be applied by the servers.
Consider a committed transaction  \var{T} of a correct client $c$ with timestamp $\hat{t}$. We need to show that 
	 \var{T} will be applied. 
At the time a correct server receives the \kConfirm{} message for \var{T}, it will calculate 
	the pending set if the set if empty and the set will have to contain the transaction 	
	\var{T}.
By Lemma~\ref{lemma:pending-resolved}, each transaction in that set will eventually be committed or cancelled.
This means that the termination condition (Figure~\ref{fig:apply},  Line~\ref{line:termination}) is guaranteed to be satisfied for
	every transaction that can be applied before \var{T}.
Each correct server will then be able to apply all non-cancelled transactions in \mbox{\var{confirmed}[t']} for every $t' < t$.
\end{proof}

\section{Byzantine Clients}\label{sec:byzantine-client}
We have assumed clients fail by crashing,
To tolerate Byzantine client behavior,
we must perform additional validation.
In particular, each client must prove that the the timestamp in its \kConfirm{}
message is correctly constructed by including the signed timestamps
Avoiding replay attacks is straightforward by having the servers sign
a cryptographic hash of the messages they send to the clients,
including transaction identifiers, which serve as nonces.

\section{Performance}\label{sec:perform}

To evaluate performance,
we adopt the definitions of \emph{gracious} and \emph{uncivil} executions from Clement et al.~\cite{Aardvark}.

\begin{definition}[Gracious execution~\cite{Aardvark}] An execution is \emph{gracious} if and only if (a) the execution is synchronous with some implementation-dependent short bound on message delay and (b) all clients and servers behave correctly.
\end{definition}

\begin{definition}[Uncivil execution~\cite{Aardvark}] An execution is \emph{uncivil} if and only if (a) the execution is synchronous with some implementation-dependent short bound on message delay, (b) up to $f$ servers and an arbitrary number of clients are Byzantine, and (c) all remaining clients and servers are correct.
\end{definition}

\subsection{Performance in gracious executions}

\subsubsection{Performance in the absence of contention}
In gracious executions,
and in the \emph{absence of contention},
the protocol requires 2.5 round-trip message delay from the time a client makes a request 
to the time it gets the result.  It takes one round-trip delay to receive the first response and calculate the timestamp $\hat{t}$.
It takes 1/2 round-trip delay for the servers to receive $\hat{t}$.
At that time, (1) correct servers initiate a consensus protocol to commit the transaction and (2) forward the \kConfirm{} message to all other servers. Another 1/2 round-trip delay later (2 round-trip delay from the start of the transaction), all correct servers decide to \var{COMMIT} the transaction (this is possible because all correct servers will be proposing the same \var{COMMIT} value) and calculate the pending sets.
At that point , the transaction can be executed (in the absence of  contention) and a response is received another 1/2 round-trip delay later at 2.5 round-trip delay from the start of the transaction

\subsubsection{Performance in the presence of contention}
In the presence of contention,
processing can be delayed by conflicting transactions that have the
same timestamp. The latest a transaction started after \var{T} can  be
assigned the same timestamp as \var{T} is just short of 1.5 round-trip
delay from the time \var{T} started (we assume that previous
transactions that are not concurrent with \var{T} have already been
cleared). In fact, a transaction that starts 1.5 round-trip delay
after \var{T} cannot  reach the servers before the time \var{T}'s
timestamp is propagated and will be assigned a later timestamp
(assuming the \kPropose{} message for the contending transaction
will propagate instantaneously in the worst case).  So, in the
presence of contention, a response might not arrive before 4
round-trip delays in gracious executions.  

It is important to note that these calculations are for one individual transaction delay and not the system throughput under load. The throughput is not affected by individual transaction latency.
The protocol is competitive in terms of latency with PBFT~\cite{PBFT} which achieves 2 round-trip delay in civil executions with a number of optimizations including speculative execution, but PBFT does not perform well in uncivil executions.


\subsection{Performance in uncivil executions}

In uncivil executions, the delay depends on the level of
contention. If a transaction is initiated and is not overlapping with
any other conflicting transaction, its delay will be the same as in
gracious executions. This is particularly good compared to
leader-based protocols in which a slow leader might need to be timed out before processing a transaction even if it does not conflict with other transactions.

In the presence of contention, a transaction can be delayed further. As in the gracious execution case, we consider the latest time a transaction can be added to the pending set of transaction \var{T}. As in the case of gracious executions, the time is 1.5 round-trip delay after \var{T} is initiated. If the client of the contending transaction fails, the full timeout would need to be incurred and a consensus protocol would need to be executed. So, the delay in this case would be  the timeout value $\delta$ plus the consensus time. The client will get a response by 0.5 round-trip delay after the consensus has ended (because the other message exchanges of the client overlap with the timeout time). 

\subsection{Other Performance Considerations}
It is  important to note that the delays are not additive. If we have
transactions with different timestamps,
and for each timestamp there is a slow pending transaction, no transaction incurs more than one timeout plus consensus delay because the timers are started in a pipelined fashion. This ensures that Byblos average throughput under client delays is minimally affected by slow clients. Also, recall that this delay is only incurred by conflicting transactions whereas in systems in which faulty servers are the source of the 
delay, all transactions are affected by server delays.

Another potential performance improvement that we did not consider is \emph{transaction batching}~\cite{PBFT} in which a number of transactions are processed in batches.  In our solution, servers communicate information about individual transactions. On the positive side, in Byblos, in the presence of contention, more transactions will get the same timestamp and the delay incurred for that timestamp is one for all transactions. This should improve throughput.

As described, Byblos uses public-key signatures~\cite{elgamal1985public,rivest1978method}, which can add significant overhead. 	Replacing signatures with message authentication codes~\cite{bellare1996keying} is a subject for future work.
Finally, the message complexity of our solution is rather high: $\mit{O}(n^2)$ messages per transaction.
Such high message complexity is not unusual for protocols that aim to achieve bounded delay (\cite{Prime,rBFT,Aardvark,BoundedByz} for example).

\section{Conclusion}\label{sec:conclude}
We showed that non-skipping timestamps can substantially simplify
state-machine replication,
facilitating a leaderless algorithm that exploits transaction
semantics to enhance concurrency.
For future work,
there are many ways to further improve Byblos' performance through
fast-path optimizations,
and through closely integrating a consensus protocol with the state-machine
replication solution.

\bibliographystyle{splncs03}
\bibliography{resolve}

\end{document}